\documentclass[12pt]{article}

\usepackage[utf8]{inputenc}
\usepackage[T1]{fontenc}

\usepackage{amssymb,amsthm}
\usepackage[hidelinks]{hyperref}
\usepackage{amsmath,graphicx,subfigure}
\usepackage{cite}
\usepackage{enumerate,enumitem}

\DeclareGraphicsExtensions{.pdf}

\newtheorem{proposition}{Proposition}
\newtheorem{lemma}{Lemma}
\newtheorem{theorem}{Theorem}
\newtheorem{corollary}{Corollary}

\newtheorem{ourclaim}{Claim}
\newenvironment{claimproof}{%
 \noindent%
 \textit{Proof of Claim.}%
}
{
\hfill $\triangle$%
\medskip
}

\title{The Density of Fan-Planar Graphs}

\author{Michael Kaufmann\\
\small Eberhard Karls Universit\"at T\"ubingen\\[-0.8ex]
\small\tt mk@informatik.uni-tuebingen.de\\
\and
Torsten Ueckerdt\\
\small Karlsruhe Institute of Technology\\[-0.8ex]
\small\tt torsten.ueckerdt@kit.edu}

\begin{document}

\maketitle

\begin{abstract}
 A topological drawing of a graph is fan-planar if for each edge $e$ the edges crossing $e$ form a star and no endpoint of $e$ is enclosed by $e$ and its crossing edges.
 A fan-planar graph is a graph admitting such a drawing. Equivalently, this can be formulated by three forbidden patterns, one of which is the configuration where $e$ is crossed by two independent edges and the other two where $e$ is crossed by two incident edges in a way that encloses some endpoint of $e$. 
 A topological drawing is simple if any two edges have at most one point in common.
 
 Fan-planar graphs are a new member in the ever-growing list of topological graphs defined by forbidden intersection patterns, such as planar graphs and their generalizations, Tur\'an-graphs and Conway's thrackle conjecture.
 Hence fan-planar graphs fall into an important field in combinatorial geometry with applications in various areas of discrete mathematics.
 As every $1$-planar graph is fan-planar and every fan-planar graph is $3$-quasiplanar, they also fit perfectly in a recent series of work on nearly-planar graphs from the area of graph drawing and combinatorial embeddings.

 In this paper we show that every fan-planar graph on $n$ vertices has at most $5n-10$ edges, even though a fan-planar drawing may have a quadratic number of crossings.
 Our bound, which is tight for every $n \geq 20$, indicates how nicely fan-planar graphs fit in the row with planar graphs ($3n-6$ edges) and $1$-planar graphs ($4n-8$ edges).
 With this, fan-planar graphs form the largest non-trivial class of topological graphs defined by forbidden patterns, for which the maximum number of edges on $n$ vertices is known exactly.

 We demonstrate that maximum fan-planar graphs carry a rich structure, which makes this class attractive for many algorithms commonly used in graph drawing.
 Finally, we discuss possible extensions and generalizations of these new concepts.
\end{abstract}

\section{Introduction}

Planarity of a graph is a well-studied concept in graph theory, computational geometry and graph drawing. The famous Euler formula characterizes for a certain embedding the relation between vertices, edges and faces, and many different algorithms e.g.~\cite{tutte1963draw} following different objectives have been developed to compute appropriate embeddings in the plane.

Because of the importance of the concepts, a series of generalizations have been developed in the past. Topological graphs and topological drawings respectively are being considered, i.e., the vertices are drawn as points in the plane and the edges drawn as Jordan curves between corresponding points without any other vertex as an interior point. In~\cite{DBLP:journals/siamdm/FoxPS13}, the authors state ``Finding the maximum number of edges in a topological graph with a forbidden crossing pattern is a fundamental problem in extremal topological graph theory'' together with 9 citations from a large group of authors. Most of the existent literature considers topological drawings that are \emph{simple}, i.e., where any two edges have at most one point in common.
In particular, two edges may not cross more than once and incident edges may not cross at all. We shall consider simple topological graphs only. Indeed, we shall argue in Section~\ref{sec:discussion} that if we drop this assumptions 
and allow non-homeomorphic parallel edges, then even $3$-vertex fan-planar graphs have arbitrarily many edges.

\begin{figure}[b]
 \centering
 \includegraphics{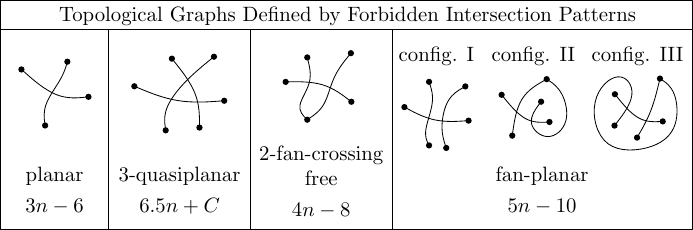}
 \caption{Topological graphs defined by forbidden patterns and the corresponding maximum number of edges in an $n$-vertex such graph.}
 \label{fig:forbidden-patterns}
\end{figure}

\paragraph{Related work.}
Most notably, there are $k$-planar graphs~\cite{DBLP:journals/combinatorica/PachT97} and $k$-quasi\-planar graphs~\cite{DBLP:journals/combinatorica/AgarwalAPPS97}.
A $k$-planar graph admits a topological drawing in which no edge is crossed more than $k$ times by other edges, while a $k$-quasiplanar graph admits a drawing in which no $k$ edges pairwise cross each other.

The topic of $k$-quasiplanar graphs is almost classical~\cite{DBLP:books/daglib/0017422}. 
A famous conjecture~\cite{DBLP:books/daglib/0017422} states that for constant $k$ the maximal number of edges in $k$-quasiplanar graphs is linear in the number of vertices. 
Note that $2$-quasiplanar graphs correspond to planar graphs. 
A first linear bound for $k=3$, i.e., $3$-quasiplanar graphs, has been shown in~\cite{DBLP:journals/combinatorica/AgarwalAPPS97} and subsequently improved in~\cite{DBLP:journals/combinatorica/PachT97}. 
For $4$-quasiplanar graphs the current best bound is $76(n-2)$~\cite{DBLP:journals/dcg/Ackerman09}. 
For the general case, the bounds have been gradually improved from $O(n (\log n)^{O(\log k)})$~\cite{DBLP:journals/combinatorica/PachT97}, to $O(n \log n \cdot 2^{\alpha (n)^c})$~\cite{DBLP:journals/corr/SukW13}.
In case of simple topological drawings, where each pair of edges intersects at most once, a bound of $6.5n + O(1)$ has been proven for $3$-quasiplanar graphs~\cite{DBLP:journals/jct/AckermanT07} and recently $O(n \log n)$ for $k$-quasiplanar graphs with any fixed $k\geq 2$~\cite{DBLP:journals/corr/SukW13}.
It is still open, if the conjecture holds for general $k$.

A $k$-planar graph admits a topological drawing in which each edge has at most $k$ crossings. 
The special case of $1$-planar graphs have been introduced by Ringel~\cite{Ringel65}, who considered the chromatic number of these graphs. 
Important work about the characterization on $1$-planar graphs has been performed by Suzuki~\cite{DBLP:journals/siamdm/Suzuki10}, Thomassen~\cite{DBLP:journals/jgt/Thomassen88a} and Hong \textit{et al.}~\cite{HongELP12}.
Testing $1$-planarity have been shown to be NP-complete for the general case~\cite{Grigoriev07} while efficient algorithms have been found for testing 1-planarity for a given rotation system~\cite{DBLP:journals/tcs/EadesHKLSS13}.
Aspects like straight-line embeddings~\cite{Alametal13} and maximality~\cite{BrandenburgEGGHR12} etc.\ have also been explored.

Closely related to $1$-planar graphs are RAC-drawable graphs~\cite{DBLP:journals/dam/EadesL13,DBLP:journals/jgaa/ArgyriouBKS13}, that is graphs that can be drawn in the plane with straight-line edges and right-angle crossings. For the maximum number of edges in such a graph with $n$ vertices, a bound of $4n -10$ could be proven~\cite{DBLP:journals/tcs/DidimoEL11}, which is remarkably close to the $4n-8$ bound for the class of $1$-planar graphs. A necessary condition for RAC-drawable graph is the absence of fan-crossings. An edge has a $k$-fan-crossing if it crosses $k$ edges that have a common endpoint, cf. Figure~\ref{fig:forbidden-patterns}. RAC-drawings do not allow $2$-fan-crossings. In a recent paper~\cite{Cheongetal13}, Cheong \textit{et al.} considered $k$-fan-crossing free graphs and gave bounds for their maximum number of edges. They obtain a tight bound of $4n-8$ for $n$-vertex $2$-fan-crossing free graphs, and a tight $4n-9$ when edges are required to be straight-line segments.
For $k > 2$, they prove an upper bound of $3(k-1)(n-2)$ edges, while all known examples of $k$-fan-crossing free graphs on $n$ vertices have no more than $kn$ edges.

\paragraph{Our results and more related work.}
As stated before we consider only simple topological drawings, i.e., any two edges have at most one point in common, and only simple graphs, i.e., graphs without loops and parallel edges.
We consider here another variant of sparse non-planar graphs, somehow halfway between $1$-planar graphs and quasiplanar graphs, where we allow more than one crossing on an edge $e$, but only if no endpoint of $e$ is enclosed by $e$ and its crossing edges.
We call this a {\bf fan-crossing} and the class of topological graphs obtained this way {\bf fan-planar graphs}.
Note that we do not differentiate on $k$-fan-crossings as it has been done by Cheong \textit{et al.}~\cite{Cheongetal13}.

The requirement that every edge in $G$ is crossed only by a fan-crossing can be stated in terms of forbidden configurations.
We define \emph{configuration~I} to be one edge that is crossed by two independent edges, and \emph{configuration~II}, respectively \emph{configuration~III}, to be an edge $e$ that is crossed by adjacent edges, such that the triangle (highlighted in figure) formed by the three segments between the crossing points and the common endpoint encloses one of the endpoints of $e$, respectively both endpoints of $e$, see Figure~\ref{fig:crossing-patterns}.
Note that for simple topological drawings, configurations~II and III are well-defined.
Now a simple topological graph is fan-planar if and only if neither configuration~I nor II nor III occurs.
Note that if we forbid only configurations~I and III, then an edge may be crossed by the three edges of a triangle, which is actually not a star, nor a fan-crossing.
However, if every edge is drawn as a straight-line segment, then neither configuration~II nor III can occur and hence in this case it is enough to forbid configuration~I.

\begin{figure}[htb]
 \centering
 \includegraphics{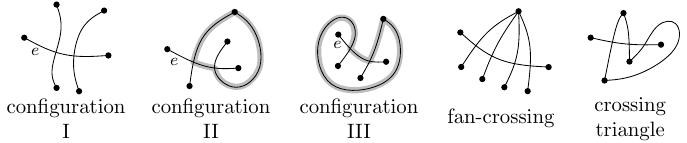}
 \caption{Crossing configurations}
 \label{fig:crossing-patterns}
\end{figure}

Obviously, $1$-planar graphs are also fan-planar.
Furthermore, fan-planar graphs are $3$-quasiplanar since there are no three independent edges that mutually cross.
So, the maximum number of edges in an $n$-vertex fan-planar graph is approximately between $4n$ and $6.5 n$.
In the following, we will explore the exact bound.

\begin{theorem}\label{thm:main}
 Every simple fan-planar graph on $n \geq 3$ vertices has at most $5n-10$ edges. This bound is tight for $n \geq 20$.
\end{theorem}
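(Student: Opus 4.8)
The plan is to fix a fan-planar drawing of $G$ that minimizes the total number of crossings, and then to analyse its \emph{planarization} $G^\times$, obtained by turning every crossing point into a new degree-four dummy vertex. First I would record the standard consequences of crossing-minimality in a simple drawing: no two edges cross more than once, incident edges do not cross, and there is no empty lens. In particular, as already observed for the two forbidden configurations, every crossed edge $e = uv$ is crossed only by the edges of a star whose center $c_e$ lies strictly off $e$ and on one fixed side of it; I would repeatedly use that $c_e \notin \{u,v\}$, since an edge incident to an endpoint of $e$ cannot cross $e$ in a simple drawing.

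If $G$ has $m$ edges and the drawing has $c$ crossings, then $G^\times$ is a plane graph on $n + c$ vertices with $m + 2c$ edges, hence with $F = 2 - n + m + c$ faces by Euler's formula. A short case check (using that both the graph and the drawing are simple) shows that $G^\times$ has no bigons: a length-two face would force either parallel edges, or two incident or identical edges crossing, all of which are excluded. Thus every face of $G^\times$ has length at least $3$, which by itself only yields a crossing-number type inequality $c \ge m - 3n + 6$, pointing in the wrong direction. The whole point is therefore to convert the fan structure into an \emph{upper} bound on the density, and this is where the argument does its real work.

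The key structural step is to charge each crossing to the planar skeleton. For a crossed edge $e$, consider the star of edges crossing it, ordered as they meet $e$; its center $c_e$ together with $e$ and the two outermost star-edges bounds a kite-like region, and the cells spanned by $c_e$ and consecutive crossing points are faces of $G^\times$. I would set up a discharging scheme that moves the initial Euler charge so that each crossing is paid for by such faces together with $c_e$ and the endpoints of $e$, exploiting that every edge $f$ crossing $e$ not only has $c_e$ as an endpoint but is itself crossed only by a star centred at an endpoint of $e$. Equivalently, one may split $E(G)$ into the uncrossed edges, which form a plane graph and so number at most $3n-6$, and the crossed edges, and then argue that the crossed edges number at most $2n-4$; summing gives $m \le 5n-10$. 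The main obstacle is precisely to obtain the constant $2$ (and not $3$) for the crossed part: this needs the full strength of forbidding both configuration~I and configuration~II, a careful accounting of how distinct stars share vertices and interleave, and the exclusion of degenerate faces, so that no crossing is charged twice and no charge is wasted.

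Finally, for tightness I would exhibit, for every $n \ge 20$, a simple fan-planar drawing on $n$ vertices with exactly $5n-10$ edges. The natural construction starts from a planar, triangulation-like base graph with $3n-6$ edges and inserts into suitable faces fans of in total $2n-4$ additional edges, each crossed only by a common star and all approaching from one side, so that neither forbidden configuration appears and no two edges meet twice. The threshold $n \ge 20$ reflects that the periodic gadget realising the extremal density needs enough vertices to be embedded without creating parallel edges or illegal crossings; I would verify the edge count from Euler's formula on the base together with the additivity of the inserted fans, and check fan-planarity locally, gadget by gadget.
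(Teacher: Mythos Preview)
Your decomposition into ``uncrossed edges $\leq 3n-6$'' and ``crossed edges $\leq 2n-4$'' cannot work, because the second bound is false. Take the paper's own tight example on $n=20$ vertices: the dodecahedron with a pentagram inscribed in each pentagonal face. The $30$ dodecahedral edges are uncrossed and the $60$ pentagram edges are all crossed, so there are $60$ crossed edges while $2n-4 = 36$. The same obstruction kills your tightness plan: you propose to start from a planar base with $3n-6$ edges and add $2n-4$ fan-edges, but a triangulated base leaves no room (every new simple edge inside a triangular face is a loop or a parallel edge), and the actual extremal examples have an uncrossed skeleton with \emph{fewer} than $3n-6$ edges compensated by \emph{more} than $2n-4$ crossed edges.

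The paper's argument is structurally quite different from a discharging on the planarization. One fixes an embedding maximizing the number of uncrossed edges, lets $H$ be the plane spanning subgraph of uncrossed edges, and then bounds the crossed edges \emph{face by face} in $H$: inside a face $f$ with facial walks of total length $|f|$ and $k_f$ boundary components there are at most $2|f| + 5(k_f-2)$ crossed edges (Lemma~\ref{lem:inside-f}, built on a sunflower characterization and several structural lemmas about how crossed edges must be tied together by uncrossed ones). Summing over faces gives
\[
|E(G)| \;\le\; |E(H)| + \sum_f\bigl(2|f| + 5(k_f-2)\bigr) \;=\; 5|E(H)| + 5\,CC(H) - 5|F(H)| - 5 \;=\; 5n-10,
\]
where the last step is Euler's formula for $H$. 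The point is that the crossed-edge budget is $\approx 2|f|$ per face, hence $\approx 4|E(H)|$ globally, and it is Euler on $H$ (not a separate $2n-4$ bound) that collapses $5|E(H)|$ plus corrections to $5n-10$. Your planarization/discharging sketch never isolates this per-face inequality, and without it there is no mechanism that turns the fan condition into an upper bound; the handwave ``each crossing is paid for by such faces together with $c_e$ and the endpoints of $e$'' does not supply one, and the concrete substitute you offer (crossed $\le 2n-4$) is disproved by the extremal graphs themselves.
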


We remark that fan-planar drawings graphs may have $\Omega(n^2)$ crossings, e.g., a straight-line drawing of $K_{2,n}$ with the bipartition classes placed on two parallel lines.

Very closely related to our approach is the research on forbidden grids in topological graphs, where a $(k,l)$ grid denotes a $k$-subset of the edges pairwise intersected by an $l$-subset of the edges, see~\cite{Pachetal} and~\cite{TardosToth}. It is known that topological graphs without $(k,l)$ grids have a linear number of edges if $k$ and $l$ are fixed. Note that configuration~I, but also a $2$-fan-crossing, is a $(2,1)$ grid. Subsequently~\cite{DBLP:journals/comgeo/AckermanFPS14}, ``natural'' $(k,l)$ grids have been considered, which have the additional requirement that the $k$ edges, as well as the $l$ edges, forming the grid are pairwise disjoint. For natural grids, the achieved bounds are superlinear. Linear bounds on the number of edges have been found for the special case of forbidden natural $(k,1)$ grids where the leading constant heavily depends on the parameter $k$. In particular, the authors give a bound of $65n$ for the case of forbidden natural $(2,1)$ grids, which correspond to our forbidden configuration~I. 
Additionally, the case of geometric graphs, i.e., graphs with straight-line edges, has been explored. For details and differences we refer to~\cite{DBLP:journals/comgeo/AckermanFPS14}. We remark that many arguments in this field of research are based on the probabilistic method, while we use a direct approach aiming on tight upper and lower bounds.

\paragraph{Remark.}
 This paper initiates the study of fan-planar graphs.
 In fact, the first preliminary version of this paper dates back to 2014~\cite{kaufmann2014density}.
 Since then, fan-planarity has become a very popular subject of study among several researchers~\cite{bekos2014recognition,binucci2014fan,di2015planar,hong2015testing,bekos2016density,ackerman2016size,brandenburg2016path,bkr-socg17,DBLP:journals/jgaa/BinucciCDGKKMT17,BekosCGHK17,BiedlC0MNR20,BinucciGDMPST15,DBLP:journals/tcs/Brandenburg20,AngeliniBKKS18,ChimaniKMV19}.
 The interested reader may also have a look at the recent surveys on fan-planarity~\cite{Bekos020} or general beyond-planar graphs~\cite{DidimoLM19}.
 
 Note that in our preliminary version, we excluded configurations I and II only, believing that this would force every edge to be crossed only by a fan-crossing.
 Just recently, we have been contacted by the authors of~\cite{klemz2021}
 with a counterexample to our Lemma~\ref{lem:two-on-cell} in the original context.
 That example (see Figure~\ref{fig:Lemma-1-counterexample_2} for a simpler such example) however contains what we now call configuration III, and is thus not fan-planar for its intended definition.
 Lemma~\ref{lem:two-on-cell} in fact holds true for every fan-planar graph ---We now mention explicitely where the absence of configuration III is used in its proof.

\section{Examples of Fan-Planar Graphs with Many Edges}\label{sec:examples}

The following examples of fan-planar graphs have $n$ vertices and $5n-10$ edges.
The first one results from a $K_{4,n-5}$, where the $n-5$ vertices form a path, see Figure~\ref{fig:K4n-4}.
An easy calculation shows that this graph has $n-1$ vertices and $4(n-5) + (n-6) = 5(n-1) - 21$ edges.
Indeed, one can add $10$ edges to the graph, keeping fan-planarity, as well as additionally one vertex with $6$ more incident edges and obtain a multigraph on $n$ vertices and $5n-10$ edges.
We remark that this graph has parallel edges; however every pair of parallel edges is non-homeomorphic, that is, it surrounds at least one vertex of $G$.
The second example is the (planar) dodecahedral graph where in each $5$-face, we draw $5$ additional edges as a pentagram, see Figure~\ref{fig:dodecahedral}.
This graph has $n = 20$ vertices and $5n-10 = 90$ edges, and has already served as a tight example for $2$-planar graphs~\cite{DBLP:journals/combinatorica/PachT97}.

\begin{figure}[htb]
 \centering
 \subfigure[\label{fig:K4n-4}]{
  \includegraphics{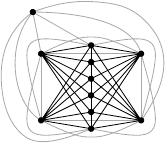}
 }
 \hspace{2em}
 \subfigure[\label{fig:dodecahedral}]{
  \includegraphics{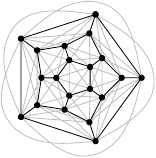}
 }
 \hspace{2em}
 \subfigure[\label{fig:LB}]{
  \includegraphics{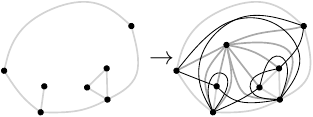}
 }
 \caption{\subref{fig:K4n-4} $K_{4,n-4}$ with $n-4$ vertices on a path. \subref{fig:dodecahedral} The dodecahedral graph with a pentagram in each face. \subref{fig:LB} Adding $2$-hops and spokes into a face.}
\end{figure}

\begin{proposition}\label{prop:LB}
 Every connected planar embedded multigraph $H$ with each face of length at least $3$ can be extended to a fan-planar multigraph $G$ with $5 |V(G)|-10$ edges by adding an independent set of vertices and sufficiently many edges, such that the uncrossed edges of $G$ are precisely the edges of $H$.
 
 Moreover, if $H$ is $3$-connected and each face has length at least $5$, then $G$ has no loops and parallel edges.
\end{proposition}
\begin{proof}
 Let $n$ and $m$ be the number of vertices and edges of $H$, respectively, and $F$ be the set of all faces of $H$. We construct the fan-planar graph $G$ by adding one vertex and two sets of edges into each face $f \in F$. So let $f$ be any face of $H$. Since $H$ is connected, $f$ corresponds to a single closed walk $v_1,\ldots,v_s$, $s \geq 3$, in $H$ around $f$, where vertices and edges may be repeated. 
  We do the following, as illustrated in Figure~\ref{fig:LB}.
 \begin{enumerate}[label=(\arabic*)]
  \item Add a new vertex $v_f$ into $f$.\label{enum:new-vertex}
  \item For $i=1,\ldots,s$ add a new edge $v_fv_i$ drawn in the interior of $f$.\label{enum:new-spokes}
  \item For $i=1,\ldots,s$ add a new edge $v_{i-1}v_{i+1}$ (with indices modulo $s$) crossing the edge $v_fv_i$.\label{enum:new-2-hops}
 \end{enumerate}
 In~\ref{enum:new-vertex} we added $|F|$ new vertices. In~\ref{enum:new-spokes} we added $\deg(f)$ many ``spoke edges'' inside face $f$, in total $\sum_{f}\deg(f) =2m$ new edges. And in~\ref{enum:new-2-hops} we added again $\deg(f)$ many ``$2$-hop edges'' inside face $f$, in total $\sum_{f}\deg(f) =2m$ new edges. Thus we calculate
 $|V(G)| = n + |F|$ and $|E(G)| = m + 2m + 2m = 5m$, 
 which together with Euler's formula $m = n + |F| -2$ gives $|E(G)| = 5|V(G)|-10$. It remains to see that no two edges in $G$ are homeomorphic, and that $G$ is fan-planar. Each ``$2$-hop edge'' $e = v_{i-1}v_{i+1}$ forms a shortcut for a path $v_{i-1}-v_i-v_{i+1}$ on the face $f$.
 Suppose some edge $e'$ in $G$ is parallel to $e$.
 In case $e'$ is also a $2$-hop edge in face $f$, then vertex $v_f$ is surrounded by $e$ and $e'$.
 Otherwise $e'$ is an edge of $H$ or a $2$-hop edge of some other face, and either vertex $v_f$ or vertex $v_i$ is surrounded by $e$ and $e'$.
 
 For the fan-planarity of $G$, observe that each ``spoke edge'' $v_fv_i$ crosses only one $2$-hop edge, and each $2$-hop edge $v_{i-1}v_{i+1}$ crosses only three edges $v_{i-2}v_i$, $v_fv_i$ and $v_iv_{i+2}$, which form a fan-crossing.

 Finally, note that if the planar graph $H$ is $3$-connected and each face has length at least $5$, then the fan-planar graph $G$ has no loops, nor parallel edges, nor crossing incident edges. Examples for such planar graphs are fullerene graphs~\cite{dovslic2002some}.
 Moreover, for every face $f$ in $H$ and the corresponding vertex $v_f$ in $G$ we have $\deg(f) = \deg(v_f)$. So, if every face in $H$ has degree exactly $5$ we can omit all vertices in step~\ref{enum:new-vertex} and edges in~\ref{enum:new-spokes} and obtain a fan-planar graph $G'$ with $V(G') = V(H)$ and $5|V(H)|-10$ edges.
\end{proof}

\section{The $5n-10$ Upper Bound For the Number of Edges}\label{sec:main-proof}

In this section Theorem~\ref{thm:main} is proven. 
It suffices to consider simple topological graphs $G$ that do not contain configuration~I nor~II nor~III and further satisfy the following properties.

\begin{enumerate}[label=(\roman*)]
 \item The chosen embedding of $G$ has the maximum number of uncrossed edges.\label{enum:max-uncrossed}
 \item The addition of any edge to the given embedding violates the fan-planarity of $G$, that is, $G$ is maximal fan-planar with respect to the given embedding.\label{enum:maximal}
\end{enumerate}

\noindent
So for the remainder of this paper let $G$ be a maximal fan-planar graph with a fixed fan-planar embedding with the maximum number of uncrossed edges. Recall that the embedding of $G$ is simple, i.e., any two edges have at most one point in common.

For such a fixed embedding of $G$ we shall split the edges of $G$ into three sets. The first set contains all uncrossed edges. We denote by $H$ the subgraph of $G$ with all vertices in $V$ and all uncrossed edges of $G$. We may refer to $H$ as the \emph{planar subgraph of $G$}. Note that $H$ might be disconnected even if $G$ is connected. In the second set we consider every crossed edge whose endpoints lie in the same connected component of $H$. Note that two such edges may cross each other only if they correspond to the same connected component of $H$. And the third set contains all remaining edges, i.e., every crossed edge with endpoints in different components of $H$. We show how to count the edges in each of the sets and derive the upper bound.

\subsection{Notation, definitions and preliminary results}\label{subsec:preliminaries}

We call a connected component of the plane after the removal of all vertices and edges of $G$ a \emph{cell of $G$}. Whenever we consider a subgraph of $G$ we consider it together with its fan-planar embedding, which is inherited from the embedding of $G$. We will sometimes consider cells of a subgraph $G'$ of $G$, even though those might contain vertices and edges of $G - G'$. The boundary of each cell $c$ is composed of a number of edge segments and some (possibly none) vertices of $G'$. With slight abuse of notation we call the cyclic order of vertices and edge segments along $c$ the \emph{boundary of $c$}, denoted by $\partial c$. Note that vertices and edges may appear more than once in the boundary of a single cell. We define the \emph{size of a cell $c$}, denoted by $||c||$, as the total number of vertices and edge segments in $\partial c$ counted with multiplicity.

Note that by assumptions~\ref{enum:max-uncrossed} and~\ref{enum:maximal} it follows that if two vertices are in the same cell $c$ of $G$ then they are connected by an uncrossed edge of $G$.
However, this uncrossed edge does not necessarily bound cell $c$.

\begin{lemma}\label{lem:two-on-cell}
 If two edges $vw$ and $ux$ cross in a point $p$, no edge at $v$ crosses $ux$ between $p$ and $u$, and no edge at $x$ crosses $vw$ between $p$ and $w$, then $u$ and $w$ are contained in the same cell of $G$.
\end{lemma}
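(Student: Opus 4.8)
The plan is to argue locally around the crossing point $p$ and to exhibit a single cell of $G$ that has both $u$ and $w$ on its boundary. Since the drawing is simple, the edges $vw$ and $ux$ cross transversally at the single point $p$, so the four segments $pv,pu,pw,px$ leave $p$ in an alternating cyclic order, which I may take to be $v,u,w,x$. Let $S$ be the open sector at $p$ bounded by the two segments $pu$ and $pw$ and meeting neither $pv$ nor $px$. Note that $v$ and $w$ lie on opposite sides of the curve $ux$, and that $u$ and $x$ lie on opposite sides of $vw$. I would first determine which edges can cross the two segments $pu$ and $pw$, and then trace the boundary of the cell incident to $S$ at $p$.

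The first step is to classify the crossings on the two bounding segments using fan-planarity. Every edge crossing $ux$ belongs to the fan of edges crossing $ux$, which, since neither configuration~I nor~II occurs, all share a common endpoint on one side of $ux$; as $vw$ is such an edge and its endpoints $v,w$ lie on opposite sides of $ux$, this common apex is $v$ or $w$. The hypothesis that no edge at $v$ crosses $ux$ between $p$ and $u$ then excludes the apex $v$ for any crossing on the open segment $pu$, so every edge crossing $pu$ is incident to $w$. Symmetrically, considering the fan of edges crossing $vw$ (with apex $u$ or $x$) and using that no edge at $x$ crosses $vw$ between $p$ and $w$, every edge crossing the open segment $pw$ is incident to $u$.

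With this classification I would take the cell $c$ incident to $p$ inside $S$ and trace its boundary from $p$. On the arc following $pu$ toward $u$: if the arc is uncrossed it reaches $u$; otherwise its first crossing lies on an edge incident to $w$, and since $w$ bounds $S$ the boundary turns onto that edge on its $w$-side and heads toward $w$. On the arc following $pw$ toward $w$: by the symmetric classification each crossing encountered lies on an edge incident to $u$, onto which the boundary turns toward $u$. Thus each boundary arc of $c$ is driven toward $u$ or toward $w$, and in every configuration one obtains a cell touching both $u$ and $w$ (taking, if the two arcs are driven to the same vertex, the analogous cell incident to the part of $ux$ beyond the outermost crossing, or to the part of $vw$ incident to $w$).

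The main obstacle is to show that such a boundary walk actually terminates at the intended vertex rather than being diverted forever: once it turns onto a fan edge $g$ incident to $w$, I must show that the sub-arc of $g$ from the crossing point to $w$ is not blocked before reaching $w$. I would resolve this by an innermost-region argument. Consider the disk $\Omega$ bounded by this sub-arc of $g$, the traversed piece of $pu$, and the piece of $pw$, and observe that any edge entering $\Omega$ through $g$ must leave through $pu$ or through $pw$ (it cannot re-cross $g$, by simplicity). Leaving through $pu$ forces it to be incident to $w$, hence incident to $g$, so by simplicity it cannot cross $g$ at all; leaving through $pw$ forces it to be incident to $u$ and strictly decreases the number of crossings inside the region. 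This descent is finite, so the walk is forced to reach the apex, and the resulting cell has both $u$ and $w$ on its boundary.
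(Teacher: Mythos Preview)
Your approach is the same zigzag as the paper's. The paper builds an explicit sequence $e_1=vw,\,e_2,\,e_3,\ldots$ where $e_{2i}$ is the edge crossing $e_{2i-1}$ whose crossing point is closest to $w$ (and is then forced by fan-planarity to be incident to $u$), while $e_{2i+1}$ is the edge crossing $e_{2i}$ whose crossing point is closest to $u$ (forced to be incident to $w$); the iteration stops when the relevant segment to $u$ or $w$ is uncrossed, and then $u$ and $w$ visibly bound the same cell. Your ``trace the cell boundary inside the sector $S$'' is this same iteration in different language, and your first classification step (every edge crossing the open segment $pu$ is incident to $w$, and symmetrically for $pw$) is exactly the observation the paper uses.

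Where your write-up falls short is the termination argument. The sentence ``any edge entering $\Omega$ through $g$ must leave through $pu$ or through $pw$'' is not justified: fan-planarity applied to $g$ only tells you that an edge $h$ crossing $g$ has \emph{one} endpoint in $\{u,x\}$ (both outside $\Omega$), but the other endpoint of $h$ may well lie in the interior of $\Omega$, so $h$ need not exit at all. Consequently the dichotomy ``leaves through $pu$'' versus ``leaves through $pw$'' that drives your contradiction/descent is incomplete. In addition, ``strictly decreases the number of crossings inside the region'' is asserted without saying what the new region is, which crossing of $g$ you picked, or why the count goes down; as written this is not a proof of termination.

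The paper avoids both problems by always taking the crossing \emph{closest to the target vertex}: this pins down $e_{i+1}$ uniquely, guarantees that the segment of $e_i$ between $p_i$ and the target is uncrossed, and makes the successive edges nested and hence distinct, so finiteness of the edge set gives termination. If you replace your vague descent by this explicit ``closest crossing'' choice and argue inductively on that sequence, your proof becomes the paper's.
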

\begin{proof}
 Let $e_0 = ux$ and $e_1 = vw$ be two edges that cross in point $p = p_1$ such that no edge at $v$ crosses $e_0$ between $p_1$ and $u$, and no edge at $x$ crosses $e_1$ between $p_1$ and $w$. If no edge of $G$ crosses $e_0$ nor $e_1$ between $p_1$ and $u$, respectively $w$, then clearly $u$ and $w$ are bounding the same cell. So assume without loss of generality that some edge of $G$ crosses $e_1$ between $p_1$ and $w$. By fan-planarity (i.e., the absence of configuration~I) such edges are incident to $u$ or $x$, where the latter is excluded by assumption. Let $e_2$ be the edge whose crossing with $e_1$ is closest to $w$, and let $p_2$ be the crossing point. As neither configuration~II nor configuration~III occurs, $w$ is not surrounded by $e_0,e_1,e_2$. See Figure~\ref{fig:two-on-cell} for an illustration of the actual situation, and Figures~\ref{fig:Lemma-1-counterexample} and \ref{fig:Lemma-1-counterexample_2} for the potential problematic situations if configuration II or III was allowed.
 \begin{figure}[htb]
  \centering
  \subfigure[\label{fig:two-on-cell}]{
   \includegraphics[scale=0.8]{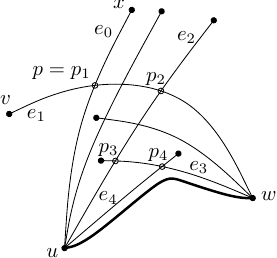}
  }
  \hspace{2em}
 \subfigure[\label{fig:Lemma-1-counterexample}]{
  \includegraphics[scale=0.8]{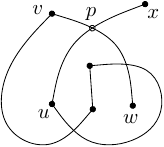}
 }
 \hspace{2em}
 \subfigure[\label{fig:Lemma-1-counterexample_2}]{
  \includegraphics[scale=0.8]{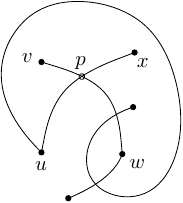}
 }
  \caption{\subref{fig:two-on-cell} Illustration of the proof of Lemma~\ref{lem:two-on-cell}.
  \subref{fig:Lemma-1-counterexample} A counterexample to Lemma~\ref{lem:two-on-cell} when configuration~II is allowed.
  \subref{fig:Lemma-1-counterexample_2} A counterexample to Lemma~\ref{lem:two-on-cell} when configuration~III is allowed.}
 \end{figure}
  
  No edge crosses $e_1$ between $w$ and $p_2$. If $e_2$ is not crossed between $u$ and $p_2$, then $u$ and $w$ are bounding the same cell and we are done. Otherwise let $e_3$ be the edge whose crossing with $e_2$ is closest to $u$, and let $p_3$ be the crossing point. By fan-planarity (the absence of configuration~I) $e_3$ and $e_1$ have a common endpoint, and it is not $v$ since $e_3$ does not cross $e_0$ between $p_1$ and $u$. (Here we use the absence of configuration~II again.) So $e_3$ ends at $w$ and by the absence of configuration~III, $u$ is not surrounded by $e_1,e_2,e_3$.  
  By the choice of $e_3$, we have that $e_2$ is not crossed between $u$ and $p_3$. Again, if $u$ and $w$ are not on the same cell then some edge crosses $e_3$ between $p_3$ and $w$. By fan-planarity (the absence of configuration~I) any such edge $e_4$ has a common endpoint with $e_2$, and if it would not be $u$ then either $e_1$ would be crossed by two independent edges (a configuration~I occurs) or one $v,w$ would be surrounded by $e_2,e_3,e_4$ (a configuration~II or~III occurs) -- a contradiction to the fan-planarity of $G$. So all edges crossing $e_3$ between $w$ and $p_3$ are incident to $u$. Let $e_4$ be such edge whose crossing with $e_3$ is closest to $w$, and let $p_4$ be the crossing point.
  Again, by absence of configuration~III, $v$ and $w$ are not surrounded by $e_2,e_3,e_4$, cf.\ Figure~\ref{fig:two-on-cell} for an illustration.
  
  Iterating this procedure until no edge crosses $e_i$ nor $e_{i-1}$ between $p_i$ and $u,w$ we see that $u$ and $w$ lie indeed on the same cell, which concludes the proof.
 \end{proof}
 
Note that we use the absence of both, configuration~II and configuration~III, in the proof of Lemma~\ref{lem:two-on-cell}.
And indeed, as illustrated in Figure~\ref{fig:Lemma-1-counterexample} and \ref{fig:Lemma-1-counterexample_2}, the statement of the lemma is no longer true if configuration~II or configuration~III may occur.
For better readibility of the remainder, we shall from now on just argue ``by fan-planarity'' without each time explicitely referring to the specific forbidden configurations.
Conclusively, in a fan-planar drawing each edge is either uncrossed or crossed by a fan-crossing.

\begin{figure}[htb]
 \centering
 \subfigure[\label{fig:4-cell}]{
  \includegraphics[scale=0.8]{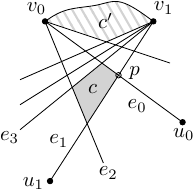}
 }
 \hspace{1.8em}
 \subfigure[\label{fig:4-cell-2}]{
  \includegraphics[scale=0.8]{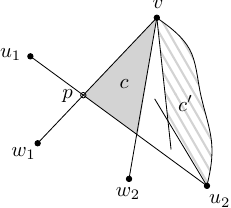}
 }
 \hspace{1.8em}
 \subfigure[\label{fig:crossing-stars-2}]{
  \includegraphics[scale=0.8]{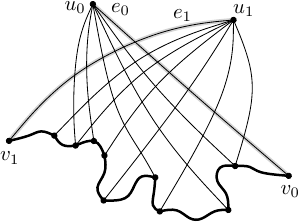}
 }
 \caption{Illustration of the proofs of \subref{fig:4-cell}, \subref{fig:4-cell-2} Corollary~\ref{cor:4cells-are-empty}
 and \subref{fig:crossing-stars-2} Corollary~\ref{cor:crossing-stars}.
 }
\end{figure}

Lemma~\ref{lem:two-on-cell} has a couple of nice consequences.

\begin{corollary}\label{cor:uncrossed-edge}
 Any two crossing edges in $G$ are connected by an uncrossed edge.
\end{corollary}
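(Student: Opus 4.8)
The plan is to deduce this directly from Lemma~\ref{lem:two-on-cell}, combined with the observation recorded just before that lemma: assumptions~\ref{enum:max-uncrossed} and~\ref{enum:maximal} force any two vertices lying in a common cell of $G$ to be joined by an uncrossed edge. So it will suffice, given two crossing edges, to exhibit one endpoint of each that Lemma~\ref{lem:two-on-cell} places in the same cell. I would fix crossing edges $e_0 = ux$ and $e_1 = vw$ meeting at $p$. All the work is in choosing the correct \emph{corner} at $p$, that is, the right endpoint of $e_0$ and the right endpoint of $e_1$, so that the two hypotheses of Lemma~\ref{lem:two-on-cell} hold automatically.

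This is where fan-planarity does the job. By definition of fan-planar, all edges crossing $e_0$ share a common endpoint on one side of $e_0$; since $e_1$ itself crosses $e_0$, that common endpoint must be an endpoint of $e_1$, say $c_0 \in \{v,w\}$ (and if $e_1$ is the only edge crossing $e_0$, I may take $c_0$ to be either of $v,w$). Symmetrically, all edges crossing $e_1$ share an endpoint $c_1 \in \{u,x\}$. I then claim that no edge incident to the endpoint of $e_1$ other than $c_0$ crosses $e_0$ strictly between $p$ and $c_1$: every edge crossing $e_0$ is incident to $c_0$, and the only edge incident to both endpoints of $e_1$ is $e_1$ itself (as $G$ is simple), which meets $e_0$ only at $p$. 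The symmetric statement holds for $e_1$ and the endpoint of $e_0$ other than $c_1$.

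These two claims are exactly the hypotheses of Lemma~\ref{lem:two-on-cell} for the corner bounded by $c_0$ and $c_1$, playing the roles of $w$ and $u$ respectively. Hence $c_0$ and $c_1$ lie in a common cell of $G$, and the observation above then yields an uncrossed edge joining them. Since $c_0$ is an endpoint of $e_1$ and $c_1$ is an endpoint of $e_0$, this is the desired uncrossed edge connecting the two crossing edges.

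I expect the only delicate point to be the bookkeeping that identifies $c_0$ and $c_1$ correctly and, in particular, confirms that the edge $e_1$, which crosses $e_0$ precisely at $p$, does not violate the ``no crossing strictly between $p$ and $c_1$'' condition. The genuinely degenerate case, in which $e_0$ and $e_1$ cross nothing but each other, is already covered by the opening sentence of the proof of Lemma~\ref{lem:two-on-cell}, where any corner works.
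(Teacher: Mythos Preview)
Your proposal is correct and follows essentially the same route as the paper's proof: use fan-planarity to identify, for each of the two crossing edges, the endpoint of the other edge that serves as the common vertex of all crossers, then feed the resulting ``corner'' into Lemma~\ref{lem:two-on-cell} and invoke the maximality/maximum-uncrossed observation to get the uncrossed edge. The paper phrases the choice of corner as two without-loss-of-generality assumptions rather than naming $c_0,c_1$ explicitly, but the content is identical.
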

\begin{proof}
 Let $ux$ and $vw$ be the two crossing edges. By fan-planarity either no other edge at $x$ or no other edge at $u$ crosses the edge $vw$, say there is no such edge at $x$. Similarly, we may assume without loss of generality that no edge at $v$ crosses the edge $ux$. However, this implies that $ux$ and $vw$ satisfy the requirements of Lemma~\ref{lem:two-on-cell} and we have that $u$ and $w$ are on the same cell. In particular, we can draw an uncrossed edge between $u$ and $w$ in this cell. Because $G$ is maximally fan-planar, $uw$ is indeed an edge of $G$. And since $G$ is embedded with the maximum number of uncrossed edges, $uw$ is also drawn uncrossed.
\end{proof}

\begin{corollary}\label{cor:4cells-are-empty}
 If $c$ is a cell of any subgraph of $G$, and $||c||= 4$, then $c$ contains no vertex of $G$ in its interior.
\end{corollary}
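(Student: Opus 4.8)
The plan is to argue by contradiction. Suppose some cell $c$ of a subgraph $G'\subseteq G$ satisfies $\|c\|=4$ yet contains a vertex $z$ of $G$ in its interior. First I would record two structural facts. Since the interior of a cell of $G'$ contains no vertex of $G'$, we have $z\notin V(G')$; hence no edge of $G'$ is incident to $z$, and every edge of $G$ incident to $z$ must leave $c$ by crossing one of the edge segments of $\partial c$. Second, since the number of corners of a cell equals the number of its boundary segments and every boundary vertex is a corner, writing $s$ for the number of edge segments and $v$ for the number of vertices on $\partial c$ gives $v\le s$ and $s+v=\|c\|=4$, so that $(s,v)\in\{(4,0),(3,1),(2,2)\}$. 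I would treat these three boundary shapes in turn, the all-crossings quadrilateral $(4,0)$ being the principal case (illustrated by Figure~\ref{fig:4-cell} and Figure~\ref{fig:4-cell-2}). If $z$ happens to have no incident edge in $G$, I would first dispose of this via the maximality assumption~\ref{enum:maximal}, so that we may fix an edge $f=zw$ of $G$ leaving $c$.

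For the core argument, let $f=zw$ cross $\partial c$ on a segment of a boundary edge $e$. In the $(4,0)$ case the two corners of $c$ lying on $e$ are crossings of $e$ with its two boundary neighbours $e'$ and $e''$, so $e$ is crossed by $e'$, $e''$ and $f$. By fan-planarity, i.e.\ the absence of configurations~I and~II, all edges crossing $e$ share a common endpoint (apex) on one side of $e$; since $e',e''\in E(G')$ are not incident to $z$, this apex cannot be $z$, and therefore it must be $w$. In particular $e'$ and $e''$ are both incident to $w$, which lies on the side of $e$ opposite to $c$. I would then apply Lemma~\ref{lem:two-on-cell}, equivalently Corollary~\ref{cor:uncrossed-edge}, to the crossing pair $f$ and $e$: the required no-crossing-in-between hypotheses are exactly supplied by the fact that near this crossing every edge meeting $e$ emanates from the apex $w$, so $z$ and one endpoint $t$ of $e$ lie on a common cell of $G$ and are joined there by an uncrossed edge of $G$.

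The contradiction then comes from locating $t$. Every cell of $G$ is contained in a single cell of $G'$, namely $c$, whence $t\in\overline{c}$. In the $(4,0)$ case, however, both endpoints of $e$ lie beyond the two crossing corners, hence strictly outside $\overline{c}$, so $t\notin\overline{c}$ — a contradiction. Equivalently, the uncrossed edge joining $z$ to $t$ would have to cross $\partial c$, whose pieces are edges of $G$, which is impossible for an uncrossed edge. Hence a $(4,0)$ cell contains no vertex of $G$. For the cases $(3,1)$ and $(2,2)$, where $\partial c$ carries one or two genuine vertices of $G$, I would run the same forcing; the difference is that the uncrossed edge may now legitimately terminate at a boundary vertex, so the contradiction is drawn instead from the maximality assumptions~\ref{enum:max-uncrossed} and~\ref{enum:maximal}, since the fan forced at $w$ together with $zw$ and the short uncrossed connections already saturate $c$ and leave no side of $e$ on which the common apex can consistently sit.

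I expect the main obstacle to be precisely the side bookkeeping in the quadrilateral case: one must verify that the apex forced by fan-planarity really lies on the far side of $e$, that the in-between hypotheses of Lemma~\ref{lem:two-on-cell} are genuinely met at the crossing $f\times e$, and that the endpoint $t$ returned by Corollary~\ref{cor:uncrossed-edge} is the endpoint lying outside $\overline{c}$ rather than one pointing back into $c$. Keeping track of which endpoints and which sides occur is where both forbidden configurations are used in full, and it is the only place where the argument is more than bookkeeping.
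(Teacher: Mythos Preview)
Your plan has a genuine gap in the central $(4,0)$ case, precisely at the step where you invoke Lemma~\ref{lem:two-on-cell}. You correctly deduce that the apex $w$ is a common endpoint of the two neighbouring boundary edges $e'$ and $e''$. But this very fact blocks the application of Lemma~\ref{lem:two-on-cell} in the orientation you want. To conclude that $z$ and an endpoint $t$ of $e$ share a cell, the lemma requires that no edge at $w$ crosses $e$ between the crossing point $p$ and $t$. However, $e'$ and $e''$ are edges at $w$, and they cross $e$ exactly at the two corners of $c$ on $e$, which lie one on each side of $p$. So whichever endpoint $t$ you aim for, an edge at $w$ crosses $e$ strictly between $p$ and $t$, and the hypothesis fails. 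What Lemma~\ref{lem:two-on-cell} (or Corollary~\ref{cor:uncrossed-edge}) \emph{does} give you is only that $w$ and some endpoint of $e$ share a cell --- which is no contradiction, as both lie outside $c$. Your remark that ``every edge meeting $e$ emanates from the apex $w$'' is exactly the obstruction, not the enabling fact.

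The paper's argument avoids this by never trying to link $z$ directly to the outside. Instead it applies Lemma~\ref{lem:two-on-cell} to two of the \emph{boundary} edges of $c$, obtaining that the two apices $v_0,v_1$ (the common endpoints of opposite boundary pairs) lie on a common cell $c'$ of $G$. Since every edge leaving the interior of $c$ must end at $v_0$ or $v_1$, one can then re-embed the entire subgraph on the interior vertices into $c'$ and strictly increase the number of uncrossed edges, contradicting assumption~\ref{enum:max-uncrossed}. The contradiction is thus global (a better fan-planar drawing exists) rather than local (an impossible uncrossed edge). Your sketch for the $(3,1)$ case is too vague to evaluate, but the paper handles it by the same re-embedding idea; the $(2,2)$ case does not occur in a simple graph, so you need not treat it.
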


\begin{proof}
 Let $c$ be a cell of $G' \subseteq G$ with $||c||=4$. Then $\partial c$ consists either of four edge segments or one vertex and three edge segments. Let us assume for the sake of contradiction that $c$ contains a set $S \neq \emptyset$ of vertices in its interior.
 
 \medskip
 
 \noindent
 \textit{Case 1. $\partial c$ consists of four edge segments.} Let $e_0,e_1,e_2,e_3$ be the edges bounding $c$ is this cyclic order. From the fan-planarity of $G$ follows that $e_0$ and $e_2$ have a common endpoint $v_0$. Similarly $e_1$ and $e_3$ have a common endpoint $v_1$. See Figure~\ref{fig:4-cell} for an illustration. If $p$ denotes the crossing point of $e_0 = v_0u_0$ and $e_1 = v_1u_1$, then by fan-planarity no edge at $u_i$ crosses $e_{i+1}$ between $p$ and $v_{i+1}$, where $i \in \{0,1\}$ and indices are taken modulo~$2$. Hence by Lemma~\ref{lem:two-on-cell} there exists a cell $c'$ of $G$ that contains both $v_0$ and $v_1$.
 
 Now consider the subgraph $G[S]$ of $G$ on the vertices inside $c$. From the fan-planarity follows that every edge between $G[S]$ and $G[V\setminus S]$ has as one endpoint $v_0$ or $v_1$. We now change the embedding of $G$ by placing the subgraph $G[S]$ (keeping its inherited embedding) into the cell $c'$ that contains $v_0$ and $v_1$. The resulting embedding of $G$ is still fan-planar and moreover at least one edge between $G[S]$ and $\{v_0,v_1\}$ is now uncrossed -- a contradiction to our assumption~\ref{enum:max-uncrossed} that the embedding of $G$ has the maximum number of uncrossed edges.
 
 \medskip
 
 \noindent
 \textit{Case 2. $\partial c$ consists of one vertex and three edge segments.} Let $v$ be the vertex and $vw_1$, $vw_2$, $u_1u_2$ be the edges bounding $c$. See Figure~\ref{fig:4-cell-2} for an illustration. If $p$ denotes the crossing point of $vw_1$ and $u_1u_2$, then by fan-planarity either no edge at $u_1$ crosses $vw_1$ between $p$ and $v$ or no edge at $u_2$ crosses $vw_1$ between $p$ and $v$. Moreover, for $i=1,2$ the edge $vw_i$ is the only edge at $w_i$ that crosses $u_1u_2$. Hence by Lemma~\ref{lem:two-on-cell} we have that either $v$ and $u_1$ or $v$ and $u_2$ are contained in the same cell of $G$ -- say cell $c'$ contains $v$ and $u_2$.
 
 Now, similarly to the previous case, consider the subgraph $G[S]$ of $G$ on the vertices inside $c$. From the fan-planarity, it follows that every edge between $G[S]$ and $G[V\setminus S]$ has as one endpoint $v$, $u_1$ or $u_2$. Moreover, every edge between a vertex in $G[S]$ and $u_1$ or $u_2$ is crossed only by edges incident to $v$, as otherwise $u_1u_2$ would be crossed by two independent edges. We now change the embedding of $G$ by placing the subgraph $G[S]$ (keeping its inherited embedding) into the cell $c'$ that contains $v$ and $u_2$. The resulting embedding of $G$ is still fan-planar and moreover at least one edge between $G[S]$ and $u_2$ is now uncrossed -- a contradiction to~\ref{enum:max-uncrossed}.
\end{proof}

\begin{corollary}\label{cor:crossing-stars}
 If $e_0 = u_0v_0$ and $e_1 = u_1v_1$ are two crossing edges of $G$ such that every edge of $G$ crossing $e_i$ is crossed only by edges incident to $u_{i+1}$, where $i \in \{0,1\}$ and indices are taken modulo $2$, then $v_0$ and $v_1$ are in the same connected component of $H$.
\end{corollary}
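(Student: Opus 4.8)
The plan is to feed the crossing pair $e_0,e_1$ directly into Lemma~\ref{lem:two-on-cell} in order to place $v_0$ and $v_1$ in a common cell of $G$, and then to upgrade ``same cell'' to ``same connected component of $H$''. The final step is essentially free: it is recorded just before Lemma~\ref{lem:two-on-cell} that, thanks to the maximality assumptions~\ref{enum:max-uncrossed} and~\ref{enum:maximal}, any two vertices lying in the same cell of $G$ are joined by an uncrossed edge; since the uncrossed edges are precisely the edges of the planar subgraph $H$, such an edge immediately witnesses that $v_0$ and $v_1$ are in the same component of $H$. So the whole task reduces to verifying the local hypotheses of Lemma~\ref{lem:two-on-cell} at the crossing point $p$ of $e_0$ and $e_1$.

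To set this up I would first read off what fan-planarity plus the hypothesis give. For each $i$, the edges crossing $e_i$ share a single common endpoint (their fan apex), and since $e_{i+1}$ is among them this apex is one of the two endpoints of $e_{i+1}$; the hypothesis pins it down, making $v_{i+1}$ the apex of the fan crossing $e_i$ while $u_{i+1}$ carries no crossing edge of $e_i$. The key local claim is then: no edge incident to $u_{i+1}$ crosses $e_i$ other than $e_{i+1}$ itself, which meets $e_i$ only at $p$. Indeed, any edge through $u_{i+1}$ that crossed $e_i$ would have to pass through the common apex $v_{i+1}$ as well, and the only edge incident to both $u_{i+1}$ and $v_{i+1}$ is $e_{i+1}$, because $G$ is simple and has no parallel edges.

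With these two facts in hand the hypotheses of Lemma~\ref{lem:two-on-cell} are met: taking $e_0=u_0v_0$ and $e_1=u_1v_1$ as the crossing pair, no edge at $u_1$ crosses $e_0$ between $p$ and $v_0$, and no edge at $u_0$ crosses $e_1$ between $p$ and $v_1$. Lemma~\ref{lem:two-on-cell} then yields that $v_0$ and $v_1$ lie in a common cell of $G$, and the observation above converts this into the desired conclusion that $v_0$ and $v_1$ are in the same connected component of $H$. This mirrors the way Corollary~\ref{cor:uncrossed-edge} was obtained from Lemma~\ref{lem:two-on-cell}, the difference being that here the two fan apexes are fixed by hypothesis rather than chosen by a WLOG argument.

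I expect the main obstacle to be exactly the translation carried out in the second paragraph: turning the \emph{global} statement about which endpoint of $e_{i+1}$ is the common apex into the \emph{local} ``no crossing strictly between $p$ and $v_i$'' conditions that Lemma~\ref{lem:two-on-cell} literally requires. This is the one place where fan-planarity (to force a single apex for each fan) and simplicity (to exclude a second, parallel copy of $e_{i+1}$ through that apex) must be combined; once the local non-crossing conditions are secured, the remainder is a direct appeal to Lemma~\ref{lem:two-on-cell} and to the same-cell observation.
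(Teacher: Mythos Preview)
Your argument rests on the claim that the fan apex for $e_i$ is $v_{i+1}$, i.e.\ that no edge at $u_{i+1}$ other than $e_{i+1}$ crosses $e_i$. This is exactly backwards. Read the hypothesis together with how the paper uses it (and with the only place it is applied, Case~2 of Claim~\ref{claim:between-same-walks}, where the standing assumption is ``$e_0$ is crossed only by edges at $u_1$'' and ``$e_1$ is crossed only by edges at $u_0$''): the intended situation is that the fan of edges crossing $e_i$ is centred at $u_{i+1}$. The paper's proof says this explicitly when it defines $S_i$ as the edges crossing $e_{i+1}$ and notes ``by assumption $S_i$ is a star centered at $u_i$''. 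So in fact \emph{every} edge crossing $e_0$ between $p$ and $v_0$ is incident to $u_1$, which is precisely the obstruction to feeding the pair into Lemma~\ref{lem:two-on-cell} with the roles you chose; the ``no edge at $u_1$ crosses $e_0$ between $p$ and $v_0$'' condition fails, not holds.

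There is a second, structural reason your plan cannot work: the conclusion of the corollary is genuinely weaker than ``same cell''. In the configuration the paper analyzes (see Figure~\ref{fig:crossing-stars}), $v_0$ and $v_1$ are separated by the interleaved leaves of the two stars $S_0$ and $S_1$ and do \emph{not} lie on a common cell of $G$; what one gets is a \emph{path} $v_0=w_1,w_2,\ldots,w_k=v_1$ of uncrossed edges along the outer boundary of $S_0\cup S_1$. Producing this path is the actual content of the proof: one has to argue, using Corollary~\ref{cor:4cells-are-empty} and the full hypothesis, that the outer boundary of $S_0\cup S_1$ is not crossed by any further edge of $G$, so consecutive leaves share a cell and hence an uncrossed edge. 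A single shot of Lemma~\ref{lem:two-on-cell} cannot replace this; at best it would recover Corollary~\ref{cor:uncrossed-edge}, putting $u_0$ and $u_1$ (not $v_0$ and $v_1$) in the same cell.
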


\begin{proof}
 Let $p$ be the point in which $e_0$ and $e_1$ cross. For $i=0,1$ let $S_i$ be the set of all edges crossing $e_{i+1}$ between $p$ and $v_{i+1}$. (All indices are taken modulo $2$.) By assumption $S_i$ is a star
 centered at $u_i$. Consider the embedding of the graph $S_0 \cup S_1$ inherited from $G$. By fan-planarity (i.e., the absence of both, configuration~I and~II) $u_0$ and $u_1$ are contained in the outer cell of $S_0 \cup S_1$. Moreover, every inner cell $c$ of $S_0 \cup S_1$ has $||c||=4$ and thus by Corollary~\ref{cor:4cells-are-empty} all leaves of $S_0$ and $S_1$ are also contained in the outer cell $c^*$ of $S_0 \cup S_1$.
 
 We claim that no edge segment in the boundary $\partial c^*$ of the outer cell is crossed by another edge in $G$. Indeed, if $e'$ is an edge crossing some edge $e \in S_0 \cup S_1$ between the crossing of $e$ and $e_0$ or $e_1$ and the endpoint of $e$ different from $u_0,u_1$, then by assumption one endpoint of $e'$ is $u_0$ or $u_1$ -- say $u_1$. Moreover, since by Corollary~\ref{cor:4cells-are-empty} no cell $c$ with $||c||=4$ contains any vertex, we have that $e'$ crosses $e_0$ between $p$ and $v_0$ and thus $e \in S_1$. See Figure~\ref{fig:crossing-stars-detail-2}.
 
 \begin{figure}[htb]
  \centering
  \subfigure[\label{fig:crossing-stars}]{
   \includegraphics{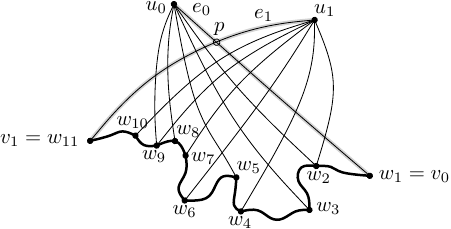}
  }
  \hspace{2em}
  \subfigure[\label{fig:crossing-stars-detail-2}]{
   \includegraphics{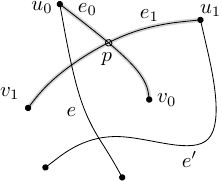}
  }
  \caption{\subref{fig:crossing-stars} The stars $S_0$ and $S_1$ in the proof of Corollary~\ref{cor:crossing-stars}. \subref{fig:crossing-stars-detail-2} If an edge $e'$ crosses $e \in S_0$ between the crossing of $e$ and $e_1$ and the endpoint of $e$ different from $u_0$, and $e' \notin S_1$, then $v_0$ is contained in a cell $c$ bounded by $e,e'$ and $e_1$ with $||c||=4$.}
 \end{figure}
 
 We conclude that if we label the vertices of $S_0 \cup S_1$ such that their cyclic order around $c^*$ is $u_0,u_1,v_0=w_1,w_2,\ldots,w_k=v_1$, then for each $j \in \{1,\ldots,k-1\}$ the vertices $w_j$ and $w_{j+1}$ are contained in the same cell of $G$ and hence by maximality of $G$ joint by an uncrossed edge. See Figure~\ref{fig:crossing-stars} for an illustration.
\end{proof}

Recall that $H$ denotes the planar subgraph of $G$. For convenience we refer to the closure of cells of $H$ as the \emph{faces of $G$}. The boundary of a face $f$ is a disjoint set of (not necessarily simple) closed walks in $H$, which we call \emph{facial walks}. The \emph{length of a facial walk $W$}, denoted by $|W|$, is the number of its edges counted with multiplicity. We remark that a facial walk may consist of only a single vertex, in which case its length is $0$. See Figure~\ref{fig:H-boundary} for an example.

For a face $f$ and a facial walk $W$ of $f$, we define $G(W)$ to be the subgraph of $G$ consisting of the walk $W$ and all edges that are drawn entirely inside $f$ and have both endpoints on $W$. The set of cells of $G(W)$ that lie inside $f$ is denoted by $C(W)$. Finally, the graph $G(W)$ is called a \emph{sunflower} if $|W| \geq 5$ and $G(W)$ has exactly $|W|$ inner edges each of which connects two vertices at distance~$2$ on $W$. See Figure~\ref{fig:maximal_faces} for an example of a sunflower. We remark that for convenience we depict facial walks in our figures as simple cycles, even when there are repeated vertices or edges. Indeed, we can assume facial walks to be simple cycles as long as we bound the number of edges in terms of the length of facial walks and sizes of cells. Only in the final proof of Theorem~\ref{thm:main} we bound the number of edges in terms of the number of vertices, and there those repetitions will be taken into account by Euler's formula.

\begin{figure}[htb]
 \centering
 \subfigure[\label{fig:H-boundary}]{
  \includegraphics{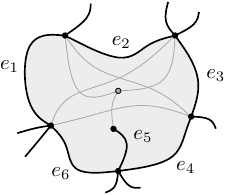}
 }
 \hspace{3em}
 \subfigure[\label{fig:maximal_faces}]{
  \includegraphics{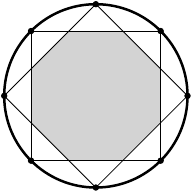}
 }
 \caption{\subref{fig:H-boundary} A cell of $H$ (drawn black) is shown in gray. The boundary of the cell is the union of the closed walk $e_1,e_2,e_3,e_4,e_5,e_5,e_6$ and the single vertex in the interior. \subref{fig:maximal_faces} A sunflower on $8$ vertices. The facial walk $W$ is drawn thick. A cell bounded by $8$ edge segments and no vertex is highlighted.}
\end{figure}

\subsection{Counting the Number of Edges}\label{subsec:counting-edges}

We shall count the number of edges of $G$ in three sets:
\begin{itemize}
 \item Edges in $H$, that is all uncrossed edges.
 \item Edges in $E(G(W)) \setminus E(W)$ for every facial walk $W$.
 \item Edges between different facial walks of the same face $f$ of $G$.
\end{itemize}

The edges in $H$ will be counted in the final proof of Theorem~\ref{thm:main} below. We start by counting the crossed edges, first within the same facial walk and afterwards between different facial walks. For convenience, for a facial walk $W$ the edges in $E(G(W)) \setminus E(W)$ and their edge segments are called \emph{inner edges} and \emph{inner edge segments} of $G(W)$, respectively.

\begin{lemma}\label{lem:condition-sunflower}
 Let $W$ be any facial walk. If every inner edge segment of $G(W)$ bounds a cell of $G(W)$ of size $4$ and no cell of $G(W)$ contains two vertices on its boundary not consecutive in $W$, then $G(W)$ is a sunflower.
\end{lemma}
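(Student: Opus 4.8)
The plan is to show that under the two hypotheses every inner edge of $G(W)$ is a ``$2$-hop'', i.e.\ it joins two vertices at distance exactly $2$ along $W$, and that there are exactly $|W|$ such edges, which is precisely the definition of a sunflower. First I would fix an inner edge $e = u x$ of $G(W)$. Since $e$ is an inner edge it is drawn inside the face $f$ and, being counted among the crossed edges, it is crossed by some edge; by fan-planarity every edge crossing $e$ is incident to a common vertex on one side, and by Corollary~\ref{cor:uncrossed-edge} any two crossing edges are joined by an uncrossed edge. The key structural input is the size-$4$ hypothesis: every inner edge segment bounds a cell $c$ of $G(W)$ with $\|c\| = 4$, and by Corollary~\ref{cor:4cells-are-empty} such a cell contains no vertex in its interior. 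I would use this to control locally what the boundary $\partial c$ of such a cell looks like, namely that it consists of a few edge segments and vertices of $W$ that are forced to be consecutive by the second hypothesis.

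Next I would analyse a size-$4$ cell incident to $e$ on each of its two sides. Because $\|c\| = 4$, the boundary $\partial c$ is composed either of four edge segments or of one vertex together with three edge segments, exactly the two cases treated in Corollary~\ref{cor:4cells-are-empty}. In the four-segment case the opposite segments come from edges sharing a common endpoint by fan-planarity, and tracing these incidences around the cell should force the endpoints of $e$ and the relevant crossing edges to lie on $W$ in a way that pins down a length-$2$ subwalk. The hypothesis that \emph{no} cell of $G(W)$ contains two non-consecutive (in $W$) vertices on its boundary is what prevents $e$ from being a longer chord or from creating a cell straddling far-apart vertices: any two vertices of $W$ appearing on the boundary of one of these cells must be consecutive along $W$, and combining the two size-$4$ cells flanking $e$ should yield that the two endpoints of $e$ are at distance $2$ on $W$, with the intermediate vertex being the apex of the fan that crosses $e$.

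Having established that every inner edge is a $2$-hop, I would count them. Each vertex $v_i$ of $W$ contributes at most one $2$-hop $v_{i-1} v_{i+1}$ ``skipping over'' it, and the size-$4$ condition should show that each such skip is actually realised: the two size-$4$ cells on either side of $e$, together with the spoke-like crossing edges at the apex, certify the presence of exactly one inner edge per vertex of $W$. Summing over the $|W|$ vertices gives exactly $|W|$ inner edges, each connecting vertices at distance $2$; together with $|W| \geq 5$ (which we may read off from the nondegeneracy forced by having any inner edges at all under these conditions, or note that a sunflower is only claimed once $|W|\ge 5$) this is precisely the definition of a sunflower.

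\textbf{Main obstacle.}
I expect the principal difficulty to be the bookkeeping around repeated vertices and edges in the facial walk $W$, since $W$ need not be a simple cycle. The clean picture of ``each vertex is skipped by exactly one $2$-hop'' must be stated in terms of \emph{positions} along the walk rather than vertices themselves, and the hypothesis about non-consecutive vertices has to be read with multiplicity in mind. A secondary subtlety is ruling out the degenerate possibility that an inner edge bounds a size-$4$ cell on only one side, or that the same inner edge is forced to play the role of a $2$-hop for two different positions; handling these cases carefully, using Corollary~\ref{cor:4cells-are-empty} to keep the cells empty and fan-planarity to pin the apex, is where the argument will need the most care.
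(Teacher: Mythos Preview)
Your plan is edge-centric --- fix an arbitrary inner edge and argue it must be a $2$-hop --- whereas the paper's argument is vertex-centric and considerably shorter. The paper does not invoke Corollary~\ref{cor:uncrossed-edge} or Corollary~\ref{cor:4cells-are-empty} at all; those are about the ambient graph $G$, while this lemma is a purely combinatorial statement about $G(W)$ and its cells. The paper's key move, which you are missing, is this: for each edge $v_iv_{i+1}$ of $W$, the unique cell $c_i$ of $G(W)$ containing that edge on its boundary has size at least~$5$ (it sees the two vertices $v_i,v_{i+1}$ and the edge between them, plus at least one inner segment on either side, and the second hypothesis prevents any shortcut). From this one reads off immediately that every vertex of $W$ has at least two incident inner edges, and that the clockwise first such edge $e^1_i$ at $v_i$ has its size-$4$ cell on the \emph{inner} side (the outer side being $c_i$). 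That size-$4$ cell then necessarily has the form ``$v_i$ plus three segments'': one segment of $e^1_i$, one segment of the clockwise next inner edge at $v_i$, and one segment of an edge $e$ crossing both. Fan-planarity forces every edge crossing $e$ to be incident to $v_i$, so each endpoint of $e$ shares a cell with $v_i$; the second hypothesis then pins $e=v_{i-1}v_{i+1}$.

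Your outline never isolates the size-$\geq 5$ observation for the $W$-edge cells, and without it you have no way to know on \emph{which} side of a given inner segment the guaranteed size-$4$ cell sits --- your phrase ``a size-$4$ cell incident to $e$ on each of its two sides'' is not what the hypothesis provides (and you note this yourself as a ``secondary subtlety'', but it is in fact the primary one). The four-segment case of Corollary~\ref{cor:4cells-are-empty} that you plan to analyse does not actually arise here: the relevant size-$4$ cell always has $v_i$ on its boundary, so it is automatically of the one-vertex-three-segments type. Finally, your existence step (``each such skip is actually realised'') is exactly what the paper's vertex-centric argument delivers directly, whereas starting from an arbitrary inner edge gives you uniqueness information but not existence.
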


\begin{proof}
 Let $v_0,\ldots,v_k$ be the clockwise order of vertices around $W$. (In the following, indices are considered modulo $k+1$.) For any vertex $v_i$ we consider the set of inner edges incident to $v_i$. Since no two non-consecutive vertices of $W$ lie on the same cell, every $v_i$ has at least one such edge. Moreover, note that for each edge $v_iv_{i+1}$ of $W$ the unique cell $c_i$ with $v_iv_{i+1}$ on its boundary has size at least $5$. This implies that every $v_i$ has indeed at least two incident inner edges. Finally, note that every inner edge is crossed, since otherwise there would be two non-consecutive vertices of $W$ bounding the same cell of $G(W)$.

 Now let us consider the clockwise first inner edge incident to $v_i$, denoted by $e^1_i$. Since an edge segment of $e^1_i$ bounds the cell $c_i$ on the other side of this segment is a cell of size $4$. This means that $e^1_i$ and the clockwise next inner edge at $v_i$ are crossed by some edge $e$. By fan-planarity $e$ crosses only edges incident to $v_i$. Thus each endpoint of $e$ bounds together with $v_i$ some cell of $G(W)$. Since only consecutive vertices of $W$ bound the same cell of $G(W)$, this implies that $e = v_{i-1}v_{i+1}$. Since this is true for every $i \in \{0,\ldots,k\}$, we conclude that $G(W)$ is a sunflower.
\end{proof}

Recall that $C(W)$ denotes the set of all bounded cells of $G(W)$.

\begin{lemma}\label{lem:inner-face}
 For every facial walk $W$ with $|W| \geq 3$ we have
 \[|E(G(W)) \setminus E(W)| \leq 2|W| - 5 - \sum_{c \in C(W)}\max\{0,||c||-5\}.\]
\end{lemma}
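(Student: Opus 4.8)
The plan is to planarize $G(W)$, feed the combinatorics of its cells into Euler's formula, and thereby reduce the whole inequality to a single clean statement: that the total ``deficiency'' of cells smaller than size $5$ is at most the number of crossings. That reduction is pure bookkeeping; the real content will be an application of fan-planarity at the very end. Throughout, write $k := |E(G(W))\setminus E(W)|$ for the number of inner edges, $X$ for the number of crossings, and $F := |C(W)|$. Note first that the edges of $W$ lie in $H$ and are therefore uncrossed, so every crossing is between two inner edges; in particular all $X$ crossings are interior to $f$.

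First I would planarize $G(W)$ inside $f$ by turning each crossing into a degree-$4$ vertex, and apply Euler's formula to the resulting plane graph. Counting the $|W|$ boundary edge-segments, the $k+2X$ inner edge-segments, the $X$ crossing-vertices, and the $F$ bounded cells (plus the outer cell) yields the two identities
\[
F = k + X + 1, \qquad \sum_{c\in C(W)} \|c\| = 2|W| + 4k + 4X .
\]
(The second identity counts each inner edge-segment twice and each boundary segment once, and counts each vertex of $W$ once per incident sector.) Eliminating $k$ and $X$ gives the equivalent form $\sum_{c}\|c\| = 2|W| + 4F - 4$, which I would sanity-check against the sunflower, where it reads $10|W|$. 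The point of these identities is the following algebraic rearrangement: writing $E := \sum_c \max\{0,\|c\|-5\}$ and $D := \sum_c \max\{0,5-\|c\|\}$ and using $\sum_c(\|c\|-5) = E - D$, the two identities collapse to
\[
k + E = (2|W|-5) - (X - D).
\]
Hence the Lemma is \emph{exactly equivalent} to the inequality $D \le X$: the total deficiency of the cells below size $5$ is at most the number of crossings.

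Next I would analyze which cells are deficient. A cell whose boundary has $b$ vertex-corners and $x$ crossing-corners has $\|c\| = 2b + x$. Simplicity forbids bigons, so $\|c\|\ge 3$; and size $3$ is impossible under fan-planarity: the case $(b,x)=(1,1)$ would make two edges incident to the same vertex cross, while $(b,x)=(0,3)$ would be three pairwise-crossing edges, necessarily independent, contradicting that $G$ is $3$-quasiplanar. Thus every deficient cell has size exactly $4$ and deficiency $1$, of type $(b,x)=(1,2)$ or $(0,4)$ (the case $(2,0)$ being ruled out as a pair of parallel edges). So $D$ is just the number of size-$4$ cells, and the goal is to charge each such cell to crossings.

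The main obstacle is this charging. I would let each $(0,4)$ cell send charge $\tfrac14$ to each of its four incident crossings and each $(1,2)$ cell send charge $\tfrac12$ to each of its two incident crossings; summing gives $D$ on the cell side and at most $X$ on the crossing side \emph{provided} no crossing receives more than $1$. Equivalently, at a crossing $p$ with $n_4$ incident cells of type $(0,4)$ and $n_3$ of type $(1,2)$ (so $n_4+n_3\le 4$), I must prove $n_4 + 2n_3 \le 4$. This is where fan-planarity re-enters. The key local fact is that a $(1,2)$ cell at $p$ is a triangle $v_0,p,q$ whose two sides at $p$ lie on the two edges $e,e'$ through $p$; hence one of $e,e'$ reaches a genuine endpoint $v_0$ with no crossing between it and $p$, i.e.\ $p$ is the \emph{extreme} crossing nearest an endpoint on that edge. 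Combining this with the fan structure around $p$ (all edges crossing $e$ share an endpoint, and likewise for $e'$, exactly as exploited in Corollary~\ref{cor:4cells-are-empty}) limits how many of the four quadrants at $p$ can host $(1,2)$ cells, and I would push a short case analysis over the cyclic arrangement of the four rays to establish $n_4 + 2n_3 \le 4$. This yields $D \le X$ and hence the Lemma; the sunflower, where every deficient cell is of type $(1,2)$ and $D = X = |W|$, shows the bound is tight and the charging is saturated. I expect the quadrant case analysis, rather than the reduction, to be the delicate part, and it is the step I would write out most carefully.
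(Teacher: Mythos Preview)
Your reduction via planarization and Euler's formula to the inequality $D \le X$ is correct and is a genuinely different route from the paper's proof, which proceeds by induction on $|E(G(W))|$ using the sunflower lemma (Lemma~\ref{lem:condition-sunflower}) to set up a three-way case split: either $G(W)$ is a sunflower, or some inner segment separates two cells of size $\ge 5$ (delete that edge), or two non-consecutive vertices of $W$ share a cell (split $G(W)$ along an uncrossed chord). Your bookkeeping is clean, and the identification of deficient cells as exactly the $(1,2)$ and $(0,4)$ cells is right.

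The gap is in the final step. The local inequality $n_4 + 2n_3 \le 4$ that your discharging needs is \emph{false}. Take $W$ to be the hexagon $b,c,a,v,u,d$ in convex position with straight-line inner edges $cd$, $ab$, $cu$, $bv$. This is fan-planar: $cd$ and $cu$ are crossed only by edges through $b$, while $ab$ and $bv$ are crossed only by edges through $c$. At the crossing $p = cd \cap ab$, the rays toward $c$ and toward $b$ are both short and are \emph{adjacent}. The quadrant between them is a $(2,1)$-cell of size $5$; the opposite quadrant (toward $a$ and $d$) is a $(0,4)$-cell bounded by the four crossings $p$, $ab\cap cu$, $cu\cap bv$, $bv\cap cd$; and the two remaining (opposite) quadrants are $(1,2)$-cells with vertex-corners $c$ and $b$ respectively. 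Hence $n_3 = 2$, $n_4 = 1$, and $n_4 + 2n_3 = 5$. If you want $G(W)$ maximal, add $cv$ and $bu$; the four quadrants at $p$ are untouched. Note that globally $D \le X$ still holds here ($D = 3$, $X = 4$ in the four-edge version), so the lemma is fine --- but the charge at $p$ is $5/4$, so your local rule cannot certify it. Your case analysis implicitly assumes the two $(1,2)$-cells, if present, occupy adjacent quadrants (as in the sunflower); the example shows they can be opposite, with a $(0,4)$-cell squeezed between them. To rescue the approach you would need a genuinely different charging scheme or a global argument for $D\le X$.
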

\begin{proof}
  Without loss of generality we may assume that $W$ is a simple cycle, since we bound the number of \emph{inner} edges of $W$ in terms of the length of $W$.  
 We proceed by induction on the number of inner edges. As induction base we consider the case that $E(G(W)) \setminus E(W) = \emptyset$. Then $G(W) = W$ and $C(W)$ consists of a single cell $c$ with $||c||=2|W|$. Thus
 \[
 |E(G(W)) \setminus E(W)| = 0 = 2|W| - 5 - (||c||-5).
 \]
 
 So assume that there is at least one inner edge.
 First, consider any inner edge segment $e^*$ and the two cells $c_1,c_2 \in C(W)$ containing $e^*$ on their boundary. If $c^*$ denotes the cell $c_1 \cup c_2$ of $G(W) \setminus e$, then
 \[
  ||c^*|| = ||c_1||+||c_2||-4
 \]
 and thus
 \begin{equation}
  \max\{0,||c^*||-5\} = \max\{0,||c_1||-5\} + \max\{0,||c_2||-5\} + x,\label{eq:split-cells}
 \end{equation}
 where $x = 1$ if $||c_1|| \geq 5$ and $||c_2|| \geq 5$ and $x=0$ otherwise.

 Now, we shall distinguish three cases: $G(W)$ is a sunflower, some inner edge segment is not bounded by a cell of size $4$, and some cell of $G(W)$ contains two vertices on its boundary that are not consecutive in $W$. By Lemma~\ref{lem:condition-sunflower} this is a complete case distinction.

 \begin{enumerate}[itemsep = \medskipamount, itemindent = 2.2em, label = \textit{Case \arabic*.}]
  \item \textit{$G(W)$ is a sunflower.} Then by definition, $G(W)$ has exactly $|W|$ inner edges. Moreover, $C(W)$ contains exactly one cell $c$ of size greater than $4$ and for that cell we have $||c|| = |W|$. Thus
  \[
   |E(G(W)) \setminus E(W)| = |W| = 2|W| - 5 - (|W|-5).
  \]
 
  \item \textit{Some edge segment $e^*$ of some inner edge $e$ bounds two cells $c_1, c_2$ of size at least $5$ each.} Then applying induction to the graph $G' = G(W) \setminus e$ we get
  \begin{align*}
   |E(G(W)) \setminus E(W)| &= 1 + |E(G') \setminus E(W)| \\
   &\leq 1 + 2|W| - 5 - \sum_{c \in C(G')} \max\{0,||c||-5\}\\
   &\overset{\eqref{eq:split-cells}}{\leq} 1 + 2|W| - 5 - \sum_{c \in C(W)} \max\{0,||c||-5\} - 1.
  \end{align*}
  
  \item \textit{Some cell of $G(W)$ contains two vertices $u,w$ on its boundary that are not consecutive on $W$.}
  Note that $uw$ may or may not be an inner edge of $G(W)$. In the latter case we denote by $c^*$ the unique cell that is bounded by $u$ and $w$. In any case exactly two cells $c_1,c_2$ of $G(W) \cup uw$ are bounded by $u$ and $w$ and we have $||c^*|| = ||c_1|| + ||c_2|| - 4$, provided $c^*$ exists.

  \begin{figure}[htb]
   \centering
   \includegraphics[scale=0.8]{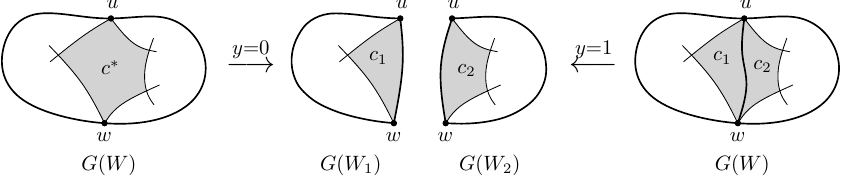}
   \caption{The graph $G(W)$ is split into two graphs $G(W_1)$ and $G(W_2)$ along two vertices $u,w$ that are not consecutive on $W$ but bound the same cell of $G(W)$.}
   \label{fig:inner-face-split}
  \end{figure}
 
  We consider the two cycles $W_1,W_2$ in $W \cup uw$ that are different from $W$, such that $W_1$ surrounds $c_1$ and $W_2$ surrounds $c_2$. For $i=1,2$ consider $G(W_i)$, i.e., the subgraph of $G(W) \cup uw$ induced by $W_i$, see Figure~\ref{fig:inner-face-split}. We have
  \begin{align*}
   |W| & = |W_1| + |W_2| - 2,\\
   |E(G(W)) \setminus E(W)| &= |(E(G(W_1)) \setminus E(W_1))|\\
   & \qquad + |(E(G(W_2)) \setminus E(W_2))| + y,\\
   \sum_{c \in C(W)}\max\{0,||c||-5\} &\overset{\eqref{eq:split-cells}}{=} \sum_{c \in C(W_1)}\max\{0,||c||-5\}\\
   & \qquad + \sum_{c \in C(W_2)}\max\{0,||c||-5\} + (1-y),
  \end{align*}
  where $y = 1$ if $uw$ already was an inner edge of $G(W)$ and $y=0$ otherwise. Now, applying induction to $G(W_1)$ and $G(W_2)$ gives the claimed bound.
 \end{enumerate}
 \vspace{-1em}
\end{proof}

Let us define by $C(f)$ the union of $C(W)$ for all facial walks $W$ of $f$. Moreover, we partition $C(f)$ into $C_\emptyset(f)$ and $C_*(f)$, where a cell $c \in C(f)$ lies in $C_\emptyset(f)$ if and only if $(c \setminus \partial c) \cap V(G) = \emptyset$. I.e., cells in $C_\emptyset(f)$ do not have any vertex of $G$ in their open interior, whereas cells in $C_*(f)$ contain some vertex of $G$ in their interior. Without loss of generality we have that for each $f$, $C_*(f)$ is either empty or contains at least one bounded cell. This can be achieved by picking a cell of $G$ that has the maximum number of surrounding Jordan curves of the form $\partial c$ for $c \in \bigcup_f C_*(f)$, and defining it to be in the unbounded cell of $G$.

Before we bound the number of edges between different facial walks of $f$ we need one more lemma. Consider a face $f$ of $G$ with at least two facial walks and a cell $c \in C_*(f)$ that is inclusion-minimal. Let $W_1$ be the facial walk with $c \in C(W_1)$ and $W_2,\ldots,W_k$ be the facial walks that are contained in $c$. For $i=1,\ldots,k$ let $c_i$ be the cell of $G(W_i)$ that contains all walks $W_j$ with $j \neq i$. In particular, we have $c_1 = c$. Moreover, we call an edge between two distinct facial walks $W_i$ and $W_j$ a \emph{$W_iW_j$-edge}.

\begin{lemma}\label{lem:no-two-vertices}
 Exactly one of $c_1,\ldots,c_k$ has a vertex on its boundary.
\end{lemma}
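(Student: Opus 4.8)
The plan is to read off the local geometry forced by the inclusion-minimality of $c$, and then to prove the two halves of the statement separately: at least one of $c_1,\dots,c_k$ carries a vertex, and at most one does. First I would fix the picture. Any cell properly contained in $c$ that carried a vertex of $G$ in its interior would itself lie in $C_*(f)$ and be strictly smaller than $c$, contradicting minimality; hence the walks $W_2,\dots,W_k$ are pairwise non-nested, and every vertex of $G$ in the interior of $c$ lies on one of them (an isolated such vertex being a facial walk of length $0$). Thus each $c_i$ is well defined as the cell of $G(W_i)$ facing the common inter-wall region, and by Corollary~\ref{cor:4cells-are-empty} no other cell of any $G(W_i)$ contains a vertex. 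With this dictionary, ``$c_i$ has a vertex on its boundary'' is equivalent to ``$W_i$ exposes one of its vertices towards the inter-wall region'', i.e.\ the crossed inner edges of $G(W_i)$ do not completely shield $W_i$ on that side; in the language of Lemma~\ref{lem:condition-sunflower} the shielded walls are the sunflower walls.

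For the existence half I would assume, for contradiction, that every wall is shielded. Then there can be no edge between two walls: near a shielded endpoint on $W_i$ such an edge must cross an inner edge of $G(W_i)$, whose endpoints are neighbours on $W_i$, and near its other, also shielded, endpoint on $W_j$ it must cross an inner edge of $G(W_j)$; since $W_i$ and $W_j$ are distinct facial walks these two crossing edges are independent, which is configuration~I. But with every wall shielded and no edge between walls, each sunflower is unpinned, and by assumption~\ref{enum:max-uncrossed} it can be re-embedded so as to turn one of its crossed inner edges into an uncrossed edge, contradicting the maximality of the number of uncrossed edges. Hence at least one $c_i$ carries a vertex.

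For the uniqueness half I would suppose two cells, say $c_i$ and $c_j$, each carry an exposed vertex, $v$ on $\partial c_i$ and $u$ on $\partial c_j$. The goal is to place $v$ and $u$ on a common cell of $G$: once this is done, the remark that two vertices of one cell of $G$ are joined by an uncrossed edge (a consequence of assumptions~\ref{enum:max-uncrossed} and~\ref{enum:maximal}) yields an uncrossed edge $vu$ drawn inside $f$, which merges the distinct facial walks $W_i$ and $W_j$ into a single one and is therefore impossible. To obtain the common cell I would start at $v$, move into the inter-wall region towards $u$, and peel off the obstructing crossed edges exactly as in the proof of Lemma~\ref{lem:two-on-cell}: at each step fan-planarity forces the blocking edges to emanate from one vertex, Corollary~\ref{cor:4cells-are-empty} keeps the size-$4$ cells empty, and Corollary~\ref{cor:crossing-stars} pins down where the successive stars may point.

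The hard part will be this last tracing. Running the Lemma~\ref{lem:two-on-cell} iteration across two different walls, one has to certify that the intervening between-wall edges and the shielding inner edges of the walls can all be by-passed without ever leaving the inter-wall region and without producing a crossing of two independent edges (configuration~I) or a two-sided fan-crossing (configuration~II). Controlling the interaction between the two stars of crossing edges that meet near $v$ and near $u$ is the delicate point, and is precisely where Corollary~\ref{cor:crossing-stars}, together with the fact that distinct facial walks lie in different components of $H$, must be combined to force termination with $v$ and $u$ on a single cell.
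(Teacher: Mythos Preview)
Your existence half is essentially fine: once a between-wall edge with two shielded endpoints is seen to be crossed by independent inner edges of two different $G(W_i)$'s (configuration~I), the absence of between-wall edges when all $c_i$ are vertex-free makes $G$ disconnected, contradicting maximality~\ref{enum:maximal}; the sunflower re-embedding detour is unnecessary and itself not justified.

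The uniqueness half, however, is a genuine gap. You propose to take exposed vertices $v\in\partial c_i$, $u\in\partial c_j$ and run a Lemma~\ref{lem:two-on-cell}-style trace to a common cell of $G$, but that lemma starts from a \emph{pair of crossing edges} with specified non-crossing side-conditions; here you have only two bare vertices facing the inter-wall region, possibly separated by many between-wall edges and by third walls $W_l$, and you give no mechanism to initiate the trace or to pass through such a $W_l$. You flag this yourself as ``the hard part'' and leave it unresolved --- and I do not see how a purely local tracing argument can be made to work, because the obstruction is global. The paper avoids any such tracing altogether. Calling a vertex of $W_i$ \emph{open} if it lies on $\partial c_i$ and \emph{closed} otherwise, it proves four structural claims: crossing between-wall edges join the same pair of walks (via Corollary~\ref{cor:uncrossed-edge} and Corollary~\ref{cor:crossing-stars}); two crossing $W_iW_j$-edges have their open ends on the same walk, so every between-wall edge has exactly one open and one closed end; closedness propagates cyclically around a walk; and a fully closed walk sends edges to only one other walk (else a planar $K_{3,3}$ appears). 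Together these force each walk to be all-open or all-closed, and the bipartite auxiliary graph between the two kinds is then connected with every closed-walk vertex of degree~$1$, hence a star with a single open-walk centre. This global structural layer --- in particular the open/closed dichotomy and the propagation claim --- is what your proposal is missing.
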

\begin{proof}
 We proceed by proving a series of claims first.
 
 \begin{ourclaim}\label{claim:between-same-walks}
  If a $W_iW_j$-edge and a $W_{i'}W_{j'}$-edge cross, then $\{i,j\} = \{i',j'\}$.
 \end{ourclaim}
 \begin{claimproof}
  Consider a $W_iW_j$-edge $e_1 = u_1v_1$ crossing a $W_{i'}W_{j'}$-edge $e_2 = u_1v_2$. By Corollary~\ref{cor:uncrossed-edge} one endpoint of $e_1$, say $u_1 \in W_i$, and one endpoint of $e_2$, say $u_2 \in W_{i'}$, are joint by an uncrossed edge. In particular, $W_i = W_{i'}$.

  If, Case 1, $e_1$ is crossed by a second edge incident to $v_2$, then applying Lemma~\ref{lem:two-on-cell} gives an uncrossed edge $u_1v_2$, which is a contradiction to the fact that $W_{j'} \neq W_{i'}$, or an uncrossed edge $v_1v_2$, which implies $W_j = W_{j'}$ as desired. See Figure~\ref{fig:between-same-walks}.

  Otherwise, Case 2, $e_1$ is crossed only by edges at $u_2$, and by symmetry $e_2$ is crossed only by edges at $u_1$.
  Applying Corollary~\ref{cor:crossing-stars} we get that $v_1$ and $v_2$ are in the same connected component of $H$ and hence $W_j = W_{j'}$, as desired.
 \end{claimproof}

 \begin{figure}[htb]
  \centering
  \subfigure[\label{fig:between-same-walks}]{
   \includegraphics[scale=0.8]{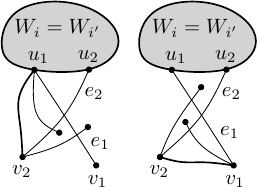}
  }
  \hspace{1em}
  \subfigure[\label{fig:open-one-side}]{
   \includegraphics[scale=0.8]{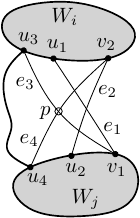}
  }
  \hspace{1em}
  \subfigure[\label{fig:to-next-end}]{
   \includegraphics[scale=0.8]{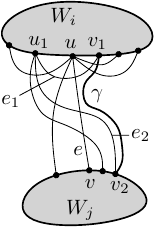}
  }
  \hspace{1em}
  \subfigure[\label{fig:one-open-to-closed}]{
   \includegraphics[scale=0.8]{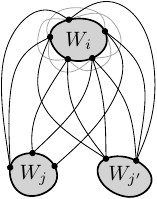}
  }
  \caption{\subref{fig:between-same-walks} Case 1 in the proof of Claim~\ref{claim:between-same-walks}. Illustrations of the proofs of \subref{fig:open-one-side} Claim~\ref{claim:open-one-side}, \subref{fig:to-next-end} Claim~\ref{claim:to-next-end} and \subref{fig:one-open-to-closed} Claim~\ref{claim:one-open-to-closed}.}
 \end{figure}

 For a facial walk $W_i$ a vertex $v \in W_i$ is called \emph{open} if $v$ lies on $\partial c_i$. Moreover, a vertex $v \in W_i$ is called \emph{closed} if $v$ is not open but there is at least one edge between $v$ and another facial walk $W_j \neq W_i$. So the endpoints of every $W_iW_j$-edge are open or closed, and by fan-planarity (the absence of configuration~I) at least one endpoint is open.

 \begin{ourclaim}\label{claim:open-one-side}
  If two $W_iW_j$-edges cross then both have exactly one open endpoint, which moreover are in the same facial walk.
 \end{ourclaim}
 \begin{claimproof}
  Let $e_1 = u_1v_1$ and $e_2 = u_2v_2$ be two crossing $W_iW_j$-edges. Assume for the sake of contradiction that $e_1$ has an open endpoint $u_1 \in W_i$ and $e_2$ has an open endpoint $u_2 \in W_j$. We consider the edges $e_3 = u_3v_1$ and $e_4 = u_4v_2$ that are incident to $v_1$ and $v_2$ respectively, cross each other and whose crossing point $p$ is furthest away from $v_1$ and $v_2$. See Figure~\ref{fig:open-one-side} for an illustration. Note that possibly $e_1 = e_3$ and/or $e_4 = e_2$.

  Now $u_3$ is not in $W_j$ because $u_1$ is an open endpoint and $u_4$ is not in $W_i$ because $u_2$ is an open end. Hence by Claim~\ref{claim:between-same-walks} $u_3 \in W_i$ and $u_4 \in W_j$. Moreover, by Lemma~\ref{lem:two-on-cell} $u_3u_4$ is an uncrossed edge of $G$ -- a contradiction to the fact that $W_i$ and $W_j$ are distinct facial walks.
 \end{claimproof}

 Claim~\ref{claim:open-one-side} implies that every edge between different facial walks has exactly one open endpoint and one closed endpoint.
 Indeed, by assumption~\ref{enum:max-uncrossed} no cell of $G$ has vertices from distinct facial walks on its boundary, meaning that every $W_iW_j$-edge $e$ is crossed by some other edge $e'$.
 If $e'$ runs between two facial walks, then $e$ has a closed endpoint by Claim~\ref{claim:open-one-side}, and if $e'$ runs within $W_i$ or $W_j$, then $e$ has a closed endpoint by definition.
 Let us remark that the second part of Claim~\ref{claim:open-one-side} uses Lemma~\ref{lem:two-on-cell}, which in turn uses the absence of configuration~II, and indeed, the second part of Claim~\ref{claim:open-one-side} is no longer true if configuration~II may occur.

 \begin{ourclaim}\label{claim:to-next-end}
  If a $W_iW_j$-edge has a closed endpoint $u \in W_i$ and $w$ is the counterclockwise next open or closed vertex of $W_i$ after $u$, then there exists a $W_iW_j$-edge incident to $w$ with open endpoint in $W_j$.
 \end{ourclaim}
 \begin{claimproof}
  Let $e = uv$ be a $W_iW_j$-edge that has a closed endpoint $u \in W_i$.
  By fan-planarity (absence of configuration~I) $v$ is an open vertex of $W_j$.
  As $u$ is a closed endpoint, some edge of $G(W_i)$ crosses $e$.
  Let $e_1 = u_1v_1$ be the edge from $G(W_i)$ whose crossing with $e$ is closest to $v$ (meaning that this crossing lies on $\partial c_i$), where without loss of generality $v_1$ comes counterclockwise after $u$ in $W_i$.
  Further assume without loss of generality that $e$ is the $W_iW_j$-edge at $u$ whose crossing with $e_1$ is closest to $v_1$.
  
  Consider a Jordan curve $\gamma$ between $v_1$ and the crossing of $e_1$ and $e$ that runs along the left side of $e_1$ and is crossed only by those edges crossing $e_1$ on this stretch.
  Below we show how to extend $\gamma$ without creating new crossings so that $\gamma$ ends at some open vertex in $W_j$ and argue that $G$ together with $\gamma$ seen as an edge at $v_1$ is fan-planar.
  Then the edge-maximality of $G$ implies that an edge with the same endpoints as $\gamma$ already exists in $G$, which will prove the claim.
  Note that by fan-planarity (absence of configuration~I) every edge crossing $\gamma$ is incident to $v$ or $u$.
  If one such edge is incident to $u$, it is the desired $W_iW_j$-edge.
  Otherwise, all such edges are incident to $u$ and by the choice of $e$ the other endpoint lies also in $W_i$.

  So let us first assume that $e$ is not crossed between $v$ and its crossing with $e_1$.
  In this case we can easily extend $\gamma$ to end at $v$ and we are done.

  So $e$ is crossed between its crossing with $e_1$ and $v$.
  Let $e_2$ be such a crossing edge whose crossing with $e$ is closest to $u$.
  Then by fan-planarity (absence of configuration~I) $e_2$ is incident to $u_1$ or $v_1$.
  Moreover, by Claim~\ref{claim:between-same-walks} and Claim~\ref{claim:open-one-side} $e_2$ has a closed endpoint in $W_i$ and an open endpoint in $W_j$.
  Thus if $e_2$ is incident to $v_1$, then $e_2$ is the desired $W_iW_j$-edge.
  So assume that $e_2 = u_1v_2$ for some $v_2 \in W_j$.
  We extend $\gamma$ along the left side of $e$ and $e_2$ all the way to $v_2$.
  We refer to Figure~\ref{fig:to-next-end} for an illustration.
  If $\gamma$ is not crossed on this stretch, we are done.
  Otherwise, by the choice of $e_2$, $\gamma$ is crossed while running along $e_2$.
  Let $e_3$ be such a crossing edge.
  By fan-planarity, $e_3$ ends at $u$ or $v$ and by the choice of $e$, it does not end at $u$.
  Finally, by Claim~\ref{claim:between-same-walks} the endpoint of $e_3$ different from $v$ lies in $W_i$, which makes $e_3$ our desired edge.
 \end{claimproof}
 
 Claim~\ref{claim:to-next-end} together with Claim~\ref{claim:open-one-side} implies that on each facial walk every closed vertex is followed by another closed vertex. In particular, the facial walks come in two kinds, one with open vertices only and one with closed vertices only. We remark that one can show that, if $W_i$ has only closed vertices, then $G(W_i)$ is a sunflower.
 
 \begin{ourclaim}\label{claim:one-open-to-closed}
  Every facial walk with only closed vertices has edges to exactly one facial walk with only open vertices.
 \end{ourclaim}
 \begin{claimproof}
  Assume for the sake of contradiction that facial walk $W_i$ with only closed vertices has edges to two different facials walks $W_j,W_{j'}$ with only open vertices. Claim~\ref{claim:to-next-end} implies that if some closed vertex of $W_i$ has an edge to $W_j$, then every closed vertex of $W_i$ has an edge to $W_j$, and the same is true for $W_{j'}$. Hence, each of the at least three closed vertices in $W_i$ has an edge to $W_j$ and an edge to $W_{j'}$, which implies that some $W_iW_j$-edge and some $W_iW_{j'}$-edge must cross, see Figure~\ref{fig:one-open-to-closed}. (Indeed, if any two such edges would not cross, then contracting $W_j$ and $W_{j'}$ into a single point each and placing a new vertex in the middle of $W_i$ with an edge to every closed vertex in $W_i$ would give a planar drawing of $K_{3,3}$.) Thus by Claim~\ref{claim:between-same-walks} we have $W_j = W_{j'}$ -- a contradiction to our assumption.
 \end{claimproof}

 \medskip
 
 \noindent
 We are now ready to prove that at most one facial walk has open vertices. Recall that by Claim~\ref{claim:to-next-end} every facial walk is of one of two kinds: only open vertices or only closed vertices. Moreover, by fan-planarity (absence of configuration~I) and Claim~\ref{claim:open-one-side} no edge runs between two facial walks of the same kind. We consider a bipartite graph $F$ whose black and white vertices correspond to facial walks of the first and second kind, respectively, and whose edges correspond to pairs $W_i,W_j$ of facial walks for which there is at least one $W_iW_j$-edge in $G$. Since $G$ is connected, $F$ is connected, and by Claim~\ref{claim:one-open-to-closed} every white vertex is adjacent to exactly one black vertex. This means that $F$ is a star and has exactly one black vertex, which concludes the proof.
\end{proof}

Now, we can bound the number of $W_iW_j$-edges.
Recall that $c$ is an in\-clu\-sion-minimal cell in $C_*(f)$ for some face $f$ of $G$, $W_1$ denotes the facial walk of $f$ with $c \in C(W_1)$ and $W_2,\ldots,W_k$ denote the facial walks of $f$ inside $c$.
Further, for $i=1,\ldots,k$ we denote by $c_i$ the cell of $G(W_i)$ containing all $W_j$ with $j\neq i$.

\begin{lemma}\label{lem:edges-between}
 The number of edges between $W_1,\ldots,W_k$ is at most
 \[
  4(k-2) + \sum_{i=1}^k||c_i||.
 \] 
\end{lemma}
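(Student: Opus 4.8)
The plan is to first read off the global combinatorial picture from Lemma~\ref{lem:no-two-vertices} and the claims established in its proof, and then to count the inter-walk edges by an Euler-type argument carried out in the region occupied by the face $f$, in which the numbers $||c_i||$ measure the boundary ``budget'' available to the edges and the term $4(k-2)$ comes from the topology of a disk with $k-1$ holes.

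First I would fix the structure. By Lemma~\ref{lem:no-two-vertices} exactly one of $c_1,\dots,c_k$ carries a vertex on its boundary; equivalently, exactly one facial walk, say $W_\star$, consists entirely of open vertices and is the centre of the star $F$ from the end of that proof, while the remaining walks (the ``islands'') consist entirely of closed vertices. By Claim~\ref{claim:open-one-side} and Claim~\ref{claim:to-next-end} every edge between two walks joins $W_\star$ to an island, with its open endpoint on $W_\star$ and its closed endpoint on the island, and by Claim~\ref{claim:between-same-walks} two such edges cross only when they join the same island to $W_\star$. Hence the bundles of edges heading to distinct islands are pairwise non-crossing, so after deleting everything except the walks and the inter-walk edges the drawing lives, crossing-free between bundles, inside the cell $c_\star$ with the $k-1$ islands removed. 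Each inter-walk edge becomes a chord of this $(k-1)$-holed disk running from a point of the outer boundary $\partial c_\star$ to the point at which it enters a hole through some $\partial c_j$, and the total length of all boundary components of this region is exactly $\sum_i ||c_i||$.

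Next I would run the count by applying Euler's formula to the subdivision of this $(k-1)$-holed disk induced by the chords, taking as vertices the chord endpoints together with the (controlled) crossing points of Claim~\ref{claim:open-one-side}, as edges the resulting arcs, and as faces the regions. Two ingredients turn Euler's relation into the claimed inequality: (a) each bounded region must ``consume'' a distinct vertex or edge-segment of one of the boundaries $\partial c_i$, which I would establish using fan-planarity together with Corollary~\ref{cor:4cells-are-empty} and Corollary~\ref{cor:crossing-stars} exactly as in the two-walk (annular) case, so that the number of regions is controlled by $\sum_i ||c_i||$; and (b) the Euler characteristic $1-(k-1)=2-k$ of the $(k-1)$-holed disk, once combined with the region count and the degree bounds coming from fan-planarity, contributes the additive $4(k-2)$.

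I expect the main obstacle to be the bookkeeping in steps (a) and (b): making the charging of regions to boundary features genuinely injective and pinning the constant. The delicate point is that fan-planarity permits many parallel inter-walk edges that share a single endpoint -- either a common open vertex of $W_\star$ or a common closed vertex of an island -- so a naive ``one edge per boundary segment'' estimate is both non-injective and off by constants. To handle this I would process each bundle in the cyclic order in which its edges leave the island, use Claim~\ref{claim:to-next-end} to step from one closed vertex to the next and Corollary~\ref{cor:4cells-are-empty} to certify that consecutive edges bound genuine size-$4$ cells whose remaining side is a fresh segment of some $\partial c_i$, and then isolate the boundedly many edges per hole that escape this argument; those exceptional edges are what assemble, via the disk-with-holes topology, into the global $4(k-2)$ term.
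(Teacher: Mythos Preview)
Your structural setup matches the paper's: both identify via Lemma~\ref{lem:no-two-vertices} the unique ``open'' walk $W_\star$ and note, via Claim~\ref{claim:between-same-walks}, that inter-walk edges form non-crossing bundles, one per island. From there the approaches diverge.

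The paper does \emph{not} apply Euler to the holed disk. Instead it builds an auxiliary planar bipartite graph $J$: one side is the set $U$ of open vertices on $\partial c_\star$, the other side has one vertex $w_i$ per facial walk; $u$ is joined to $w_i$ whenever a $uW_i$-edge exists (and every $u$ is joined to $w_\star$). Contracting each island to a point produces a plane drawing of $J$, so $|E(J)|\le 2(|U|+k)-4$. Independently, a short argument shows that for each island $W_i$ the number of $uW_i$-edges is at most $||c_i||+2\deg_J(w_i)$: strip off the last two $uW_i$-edges in the rotation at each $u\in U$; what remains hits distinct closed vertices of $W_i$, hence at most $|W_i|=||c_i||$ survive. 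Summing over $i\ge 2$ and using $\deg_J(w_\star)=|U|$ and $2|U|\le ||c_\star||$ gives the bound, with the $4(k-2)$ falling directly out of the bipartite-planar edge bound.

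Your plan, by contrast, keeps the islands as holes and hopes a single Euler computation on the $(k-1)$-holed disk will deliver both $\sum_i||c_i||$ and $4(k-2)$ at once. This may be salvageable, but as written there is a genuine gap: you never specify the graph to which Euler is applied (which points are vertices, which arcs are edges), nor explain why the characteristic $2-k$ becomes $4(k-2)$ rather than $c(k-2)$ for some other constant $c$. That factor depends on a lower bound for face sizes in your subdivision, which in turn depends on exactly how crossings and boundary incidences are counted---and you have already conceded that the ``one region per boundary feature'' charging is not injective. The paper sidesteps all of this by decoupling the two contributions: the planar bipartite bound produces the constant cleanly, and the per-island estimate handles the $||c_i||$ term without any region-charging scheme at all.
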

\begin{proof}
 By Lemma~\ref{lem:no-two-vertices} exactly one of $c_1,\ldots,c_k$ has vertices on its boundary, say $W_1$. Let $U$ be the set of vertices on the boundary of $c_1$. For a vertex $u \in U$ and an index $i\in\{2,\ldots,k\}$ we call an edge between $u$ and $W_i$ a \emph{$uW_i$-edge}. We define a bipartite graph $J$ as follows. One bipartition class is formed by the vertices in $U$. In the second bipartition class there is one vertex $w_i$ for each facial walk $W_i$, $i=1,\ldots,k$. A vertex $u \in U$ is connected by an edge to $w_i$ if and only if $i=1$ or $i \geq 2$ and there is a $uW_i$-edge.
 
 \begin{ourclaim}\label{claim:planar-bipartite}
  The graph $J$ is planar.
 \end{ourclaim}
 \begin{claimproof}
  We consider the following embedding of $J$. Afterwards we shall argue that this embedding is indeed a plane embedding. So take the position of every vertex $u \in U$ from the fan-planar embedding of $G$. For $i \geq 2$, we consider the drawing of $W_i$ in the embedding of $G$, for each edge between a vertex $u \in U$ and the vertex $w_i$ in $J$ we take the drawing of one $uW_i$-edge in $G$, and then contract the drawing of $W_i$ into a single point -- the position for vertex $w_i$. Finally, we place the last vertex $w_1$ outside the cell $c_1$ and connect $w_1$ to each $u \in U$ in such a way that these edges do not cross any other edge in $J$. See Figure~\ref{fig:drawing-of-I} for an illustrating example.
  
  \begin{figure}[htb]
   \centering
   \subfigure[\label{fig:drawing-of-I}]{
    \includegraphics[scale=0.8]{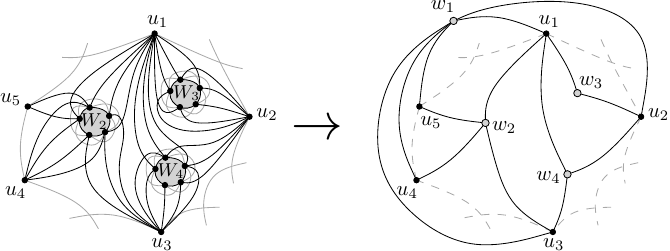}
   }
   \hspace{2em}
   \subfigure[\label{fig:deg-claim}]{
    \includegraphics[scale=0.8]{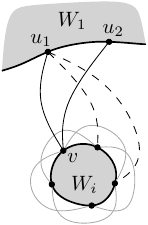}
   }
   \caption{\subref{fig:drawing-of-I} Obtaining the graph $J$. \subref{fig:deg-claim} The contradiction in Claim~\ref{claim:uW_i-edges}.}
   
  \end{figure}

  Now the resulting drawing of $J$ contains crossing edges only if a $uW_i$-edge crosses a $u'W_{i'}$-edge in $G$. However, by Lemma~\ref{lem:no-two-vertices} the cells $c_2,\ldots,c_k$ have no vertices on their boundary. Hence, for each $i=2,\ldots,k$ every $uW_i$-edge crosses an edge of $G(W_i)$. Now if a $uW_i$-edge $e$ would cross a $u'W_{i'}$-edge with $u \neq u'$ and $i \neq i'$, then $e$ would be crossed by two independent edges, which is configuration~I and hence a contradiction to the fan-planarity of $G$.
 \end{claimproof}
 
 \noindent
 Since $J$ is a planar bipartite graph with bipartition classes of size $|U|$ and $k$ we have
 \[
  |E(J)| = \sum_{i=1}^k \deg_J(w_i) \leq 2(|U|+k)-4.
 \]
 
 \begin{ourclaim}\label{claim:uW_i-edges}
  For each $i = 2,\ldots,k$ the number of $uW_i$-edges is at most
  \[
   ||c_i|| + 2\deg_J(w_i).
  \]
 \end{ourclaim}
 \begin{claimproof}
  Consider the vertices on $W_i$ and the set $U' \subseteq U$ of vertices on $W_1$ that have a neighbor on $W_i$. 
  For each $u \in U'$ consider the cyclic ordering of $uW_i$-edges around $u$.
  Since not every edge at $u$ is a $uW_i$-edge (at least one edge ends in $W_1$) we obtain a linear order on the $uW_i$-edges going counterclockwise around $u$.
  
  Now we claim that when we remove for each $u \in U'$ the last two $uW_i$-edges in the linear order for $u$, then every vertex $v$ in $W_i$ is the endpoint of at most one $uW_i$-edge. Indeed, if after the edges have been removed two vertices $u_1,u_2 \in U'$ have a common neighbor $v$ on $W_i$, then at least two $u_1W_i$-edges, say $e_1,e_2$, cross the edge $u_2v$ (or the other way around). As $v$ is closed, $u_2v$ is crossed by an edge $e$ in $G(W_i)$. By fan-planarity (the absence of configurations I and II) $e,e_1$ and $e_2$ have a common endpoint on $W_i$, making $e_1,e_2$ a pair of parallel edges -- a contradiction. So the number of $uW_i$-edges is at most $2|U'|+|W_i| = ||c_i|| + 2 \deg_J(w_i)$.
 \end{claimproof}

 \noindent
 We can now bound the total number of $uW_i$-edges with $i \geq 2$ as follows.
 \begin{align*}
  \sum_{i=2}^k \#\text{$uW_i$-edges } &\leq \sum_{i=2}^k (||c_i|| + 2\deg_J(w_i))\\
   &= \sum_{i=2}^k ||c_i|| + 2|E(J)| - 2\deg_J(w_1)\\
   &\leq \sum_{i=2}^k ||c_i|| + 4(|U|+k)-8 -2|U|\\
   &= \sum_{i=2}^k ||c_i|| + 2|U| + 4(k-2) \leq \sum_{i=2}^k ||c_i|| + ||c_1|| + 4(k-2).
 \end{align*}
\end{proof}

We continue by bounding the total number of crossed edges of $G$ that are drawn inside a fixed face $f$ of $G$. To this end let $k_f$ be the number of distinct facial walks of $f$ and $|f|$ be the sum of lengths of facial walks of $f$, i.e., $|f| = \sum_{W \text{ facial walk of $f$}}|W|$.

\begin{lemma}\label{lem:inside-f}
  The number of edges inside $f$ is at most
  \[
   2|f| + 5(k_f-2) - \sum_{c \in C_\emptyset(f)}\max\{0,||c||-5\}.
  \]
\end{lemma}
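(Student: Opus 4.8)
The plan is to bound the edges inside a face $f$ by splitting them into two groups: the inner edges that stay within a single facial walk (counted by Lemma~\ref{lem:inner-face}) and the edges running between distinct facial walks (counted by Lemma~\ref{lem:edges-between}). Summing these two contributions over all facial walks of $f$ and then simplifying against the target bound is the whole game; the main work is bookkeeping the cell-size terms so that the cells $c_i$ appearing in Lemma~\ref{lem:edges-between} are correctly absorbed.

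\textbf{First} I would handle the trivial case where $f$ has a single facial walk $W$, so $k_f = 1$ and $|f| = |W|$. Here there are no edges between distinct walks, and every edge inside $f$ is an inner edge of $G(W)$. Applying Lemma~\ref{lem:inner-face} directly gives at most $2|W| - 5 - \sum_{c\in C(W)}\max\{0,\|c\|-5\}$ edges, and since $C(f) = C(W)$ in this case I would check that this is at most $2|f| + 5(k_f-2) - \sum_{c\in C_\emptyset(f)}\max\{0,\|c\|-5\}$, noting that $5(k_f-2) = -5$ and that restricting the sum from $C(W)$ to $C_\emptyset(f)$ only drops nonnegative terms. (A short argument is needed that the subtracted cells all lie in $C_\emptyset$, or that the discrepancy is favorable.) I should also dispose of small $|W|$ where Lemma~\ref{lem:inner-face} does not apply, but those contribute no inner edges.

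\textbf{For the general case} with $k_f \geq 2$ facial walks $W_1,\ldots,W_k$, I would add up: over all $i$, the inner edges of $G(W_i)$, bounded by Lemma~\ref{lem:inner-face} as $\sum_i \bigl(2|W_i| - 5\bigr)$ minus cell corrections; plus the between-walk edges, bounded by Lemma~\ref{lem:edges-between} as $4(k-2) + \sum_{i=1}^k \|c_i\|$. The sum of the linear terms is $2\sum_i|W_i| = 2|f|$, and the constants give $-5k + 4(k-2) = -k - 8$, which I need to match against the target's $5(k_f-2) = 5k - 10$; the gap must be closed by the cell-size terms. The key observation is that each cell $c_i$ from Lemma~\ref{lem:edges-between} is exactly the large cell of $G(W_i)$ that contains the other walks, so its contribution $\|c_i\|$ should be offset by the $-\max\{0,\|c_i\|-5\}$ term appearing in the Lemma~\ref{lem:inner-face} bound for $G(W_i)$; combining $\|c_i\| - \max\{0,\|c_i\|-5\}$ yields roughly $5$ per walk, which is precisely what converts $-k-8$ into $5k-10$.

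\textbf{The hard part} will be the accounting between the two kinds of cell-size terms: the sum $\sum_c \max\{0,\|c\|-5\}$ in the target runs only over $C_\emptyset(f)$, whereas Lemma~\ref{lem:inner-face} sums over all of $C(W_i)$. I must argue that the cells $c_i$ (which contain other facial walks, hence lie in $C_*(f)$, not $C_\emptyset(f)$) are exactly the cells I am allowed to treat separately, so that subtracting their $\max\{0,\|c_i\|-5\}$ contributions and instead pairing them with the $\|c_i\|$ terms from Lemma~\ref{lem:edges-between} leaves the remaining cell-corrections ranging over $C_\emptyset(f)$ as required. I would verify that no cell is double-counted across different $G(W_i)$ and that every cell of $C_\emptyset(f)$ is a cell of exactly one $G(W_i)$, so the surviving correction sum is precisely $\sum_{c\in C_\emptyset(f)}\max\{0,\|c\|-5\}$. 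Once this correspondence is pinned down, the inequality follows by straightforward addition.
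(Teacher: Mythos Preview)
Your arithmetic is sound and your flat decomposition is exactly right \emph{when it applies}, but there is a structural gap: Lemma~\ref{lem:edges-between} is not stated for an arbitrary collection of facial walks. It is stated in the specific situation set up just before Lemma~\ref{lem:no-two-vertices}, namely after fixing an \emph{inclusion-minimal} cell $c\in C_*(f)$, with $W_1$ the walk owning $c$ and $W_2,\ldots,W_k$ the walks lying inside $c$. In that situation each $c_i$ is well-defined as the unique cell of $G(W_i)$ containing all the other walks, and Lemma~\ref{lem:no-two-vertices} (on which Lemma~\ref{lem:edges-between} relies) holds. For a general face $f$ this can fail: if some inner edge of $G(W_2)$ together with an arc of $W_2$ encloses $W_3$ while $W_1$ sits outside, then no single cell of $G(W_2)$ contains both $W_1$ and $W_3$, so your $c_2$ does not exist, $C_*(f)$ contains more than the $k_f$ cells you budget for, and you cannot invoke Lemma~\ref{lem:edges-between} on all $k_f$ walks at once.

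The paper deals with this by induction on $k_f$: it picks an inclusion-minimal cell $c_1\in C_*(f)$, lets $W_2,\ldots,W_j$ be the walks inside it (possibly $j<k_f$), applies Lemma~\ref{lem:edges-between} and Lemma~\ref{lem:inner-face} only to that cluster, deletes those inner walks, and recurses on the resulting face $f'$ with $k_f-(j-1)$ walks. Your all-at-once bookkeeping is exactly the special case $j=k_f$ of this induction, i.e., the case with no nesting; to make your argument complete you would need either to prove that nesting cannot occur (it can), or to rebuild the inductive peeling that the paper uses.
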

\begin{proof}
 We do induction on $k_f$.
 For $k_f=1$, the face $f$ is bounded by a unique facial walk $W$.
 Then by Lemma~\ref{lem:inner-face} there are at most
 \[
  2|W| - 5 - \sum_{c\in C(W)} \max\{0,||c||-5\}
 \]
 edges inside $f$.
 With $|W| = |f|$ and $C_\emptyset(f) = C(W)$ this gives the claimed bound.

 For $k_f \geq 2$, the face $f$ has $k = k_f$ distinct facial walks $W_1,\ldots,W_k$. Let $c$ be an inclusion-minimal cell in $C_*(f)$. Without loss of generality let $W_1$ be the facial walk with $c \in C(W_1)$ and $W_2,\ldots,W_j$ be the facial walks of $f$ that lie inside $c$. In particular we have $2 \leq j \leq k$. Let $G'$ be the graph that is obtained from $G$ after removing all vertices that lie inside $c$. We consider $G'$ with its fan-planar embedding inherited from $G$. Clearly, the face $f'$ in $G'$ corresponding to $f$ in $G$ has exactly $k-(j-1) < k$ facial walks and we have
 \[|f| = |f'| + |W_2| + \cdots + |W_j|.\]
 For $i=1,\ldots,j$ we denote by $c_i$ the cell of $G(W_i)$ containing $W_1$. (Hence $c_1 = c$.) Moreover, let $C = C(W_2)\cup\cdots\cup C(W_j)$. Then
 \[C_\emptyset(f) = (C_\emptyset(f') \cup C) \setminus \{c_1,c_2,\ldots,c_j\}.\]
 Further we partition the edges inside $f$ into three disjoint sets $E_1, E_2, E_3$ as follows:
 \begin{itemize}
  \item The edges in $E_1$ are precisely the edges of $G'$ inside $f'$.
  \item The edges in $E_2$ are precisely the edges of $G$ between $W_1$ and $W_2 \cup \cdots \cup W_j$.
  \item $E_3 = (E(G(W_2))\setminus E(W_2)) \cup \cdots \cup (E(G(W_j))\setminus E(W_j))$.
 \end{itemize}
 Now by induction hypothesis,
 \[
  |E_1| \leq 2|f'| + 5(k-j-1) - \sum_{c \in C_\emptyset(f')}\max\{0,||c||-5\}.
 \]
 By Lemma~\ref{lem:edges-between},
 \[
  |E_2| \leq \sum_{i=1}^j ||c_i|| + 4(j-2) \leq \sum_{i=1}^j\max\{0,||c_i||-5\} + 9j - 8,
 \]
 and by Lemma~\ref{lem:inner-face}, 
 \[
  |E_3| \leq 2(|W_2|+\cdots+|W_j|)-5(j-1)-\sum_{c \in C}\max\{0,||c||-5\}. 
 \]
 Plugging everything together we conclude that the number of edges of $G$ inside $f$ is at most
 \begin{eqnarray*}
  |E_1 \dot{\cup} E_2 \dot{\cup} E_3| &\leq& 2|f'| + 5(k-j-1) - \sum_{c \in C_\emptyset(f')}\max\{0,||c||-5\}\\
  & &+ \sum_{i=1}^j\max\{0,||c_i||-5\} + 9j - 8\\
  & &+ 2(|W_2|+\cdots+|W_j|)-5(j-1)-\sum_{c \in C}\max\{0,||c||-5\}\\
  &=& 2|f| + 5(k-2) -(j-2) -\sum_{c \in C_\emptyset(f)}\max\{0,||c||-5\}\\
  &\leq& 2|f| + 5(k_f-2) -\sum_{c \in C_\emptyset(f)}\max\{0,||c||-5\},
 \end{eqnarray*}
 which concludes the proof.
\end{proof}

Lemma~\ref{lem:inside-f} implies that inside a face $f$ of $H$ there are at most $2|f| + 5(k_f-2)$ edges. Having this, we are now ready to prove our main theorem, namely that every simple fan-planar graph on $n \geq 3$ vertices has at most $5n-10$ edges.
\begin{proof}[Proof of Theorem~\ref{thm:main}]
 Consider a fan-planar graph $G=(V,E)$ on $n$ vertices with properties~\ref{enum:max-uncrossed} and~\ref{enum:maximal}. Let $H$ be the spanning subgraph of $G$ on all uncrossed edges. Note that
 $V(H) = V(G)$.
 We denote by $F(H)$ the set of all faces of $H$. Since every edge $e \in E(H)$ appears either exactly once in two distinct facial walks or exactly twice in the same facial walk, we have
 \[
  \sum_{f \in F(H)} |f| = 2|E(H)|.
 \]

 Further we denote by $k_f$ the number of facial walks for a given face $f$, and by $CC(H)$ the number of connected components of $H$. Since a face with $k$ facial walks gives rise to $k$ connected components of $H$, we have
 \[
  \sum_{f \in F(H)} (k_f-1) = CC(H) - 1.
 \]
 
 Hence we conclude
 \begin{eqnarray*}
  |E(G)| &\overset{\text{Lemma}~\ref{lem:inside-f}}{\leq}& |E(H)| + \sum_{f \in F(H)} (2|f|+5(k_f-2))\\
   &=& |E(H)| + 2\sum_{f \in F(H)} |f| + 5\sum_{f \in F(H)}(k_f-1) -5|F(H)|\\
   &=& 5|E(H)| + 5 CC(H) - 5|F(H)| - 5 = 5|V(H)|-10,
 \end{eqnarray*}
 where the last equation is Euler's formula for the plane embedded graph $H$. With $|V(H)| = |V(G)| = n$ this concludes the proof.
\end{proof}

\section{Discussion}\label{sec:discussion}

We have shown that every simple $n$-vertex fan-planar graph has at most $5n-10$ edges. We have seen that maximum fan-planar graphs carry an underlying planar structure, possibly enhanced by relatively simple local (non-planar) substructures like stars or $2$-hop edges along the boundary of each face, and combinations of both. Such properties make this class attractive for many algorithms commonly used in graph drawing that also use planar subgraphs as a base, and enhance the drawing by additional non-planar edges.

The new concept of fan-planarity opens a variety of possible research directions.
Of course, if we allow $G$ to have parallel edges or loops, there could be arbitrarily many edges, even if the drawing of $G$ is planar. However, if we only forbid configuration~I, but allow configurations~II and~III, can an $n$-vertex topological graph have more than $5n-10$ edges?

If we consider topological multigraphs with \emph{non-homeomorphic parallel edges} and \emph{non-trivial loops}, does our $5n-10$ bound still hold? Here, two parallel edges are non-homeomorphic (a loop is non-trivial) if the bounded component of the plane described by the edges (the edge) contains at least one vertex.
Note for instance that Euler's formula still holds for plane graphs with non-homeomorphic parallel edges and non-trivial loops, and in this case every face still has length at least $3$. 
Hence such plane multigraphs still have at most $3n-6$ edges. 
We strongly conjecture that our $5n-10$ bound also holds for such fan-planar multigraphs.

If we allow non-simple topological graphs, i.e., allow edges to cross more than once and incident edges to cross, does every $n$-vertex fan-planar graph still have at most $5n-10$ edges? We remark, that if we allow both, non-simple drawings and non-homeomorphic parallel edges, then there are $3$-vertex topological graphs with arbitrarily many edges. Let us simply refer to Figure~\ref{fig:non-simple} for such an example. The idea is to start with an edge $e_1$ from $u$ to $v$, and edge $e_i$ starts clockwise next to $e_{i-1}$ at $u$ goes in parallel with $e_{i-1}$ until $e_{i-1}$ ends at $v$, where $e_i$ goes a little further surrounding $v$ once. No two parallel edges are homeomorphic.

 \begin{figure}[htb]
  \centering
  \subfigure[\label{fig:non-simple}]{
   \includegraphics{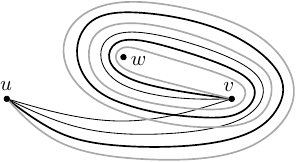}
  }
  \hspace{2em}
    \subfigure[\label{fig:straightlinemax}]{
   \includegraphics[width=4cm]{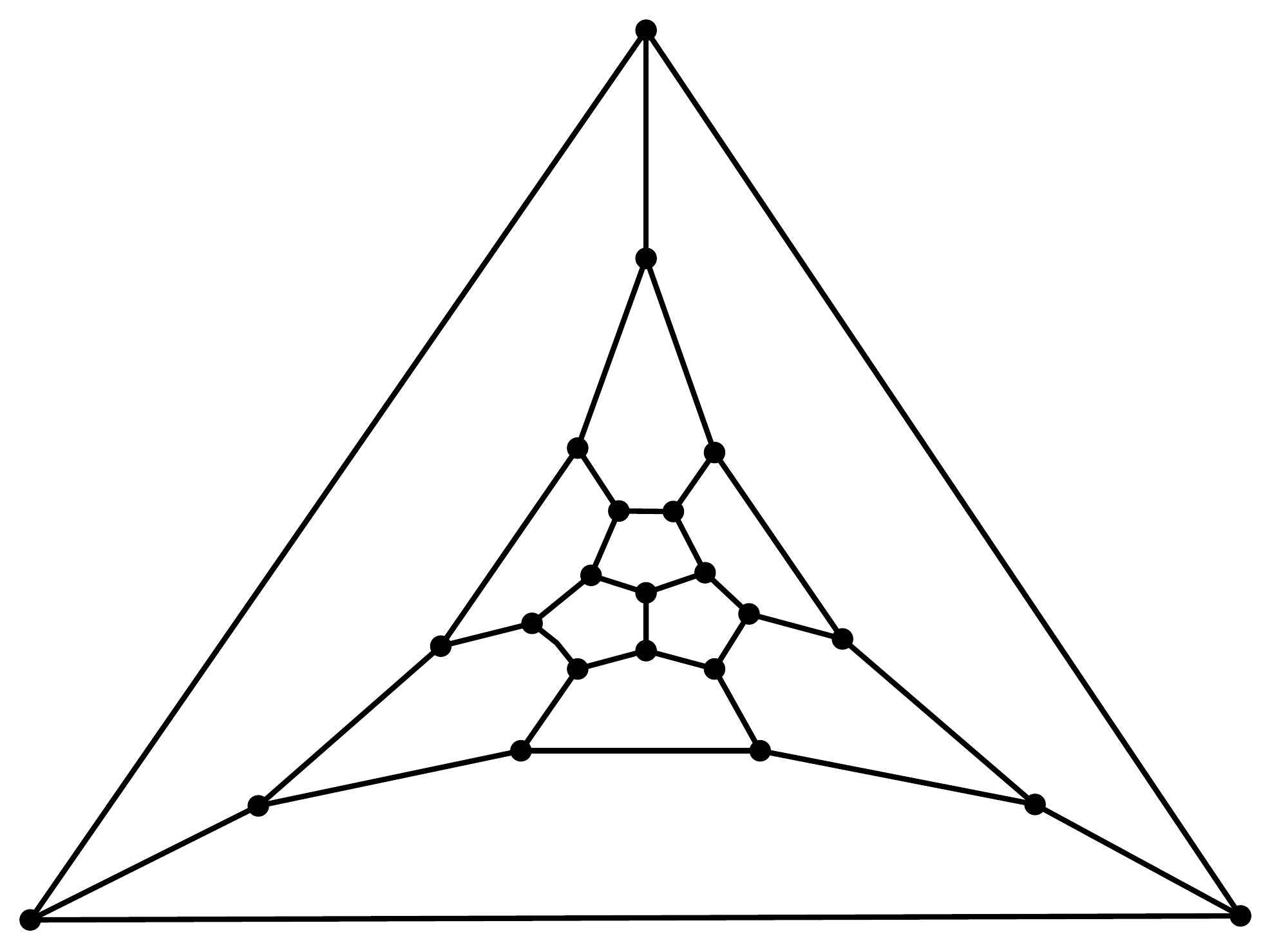}
  } 
   \hspace{2em}
    \subfigure[\label{fig:straightlinemaxext}]{
   \includegraphics[width=4cm]{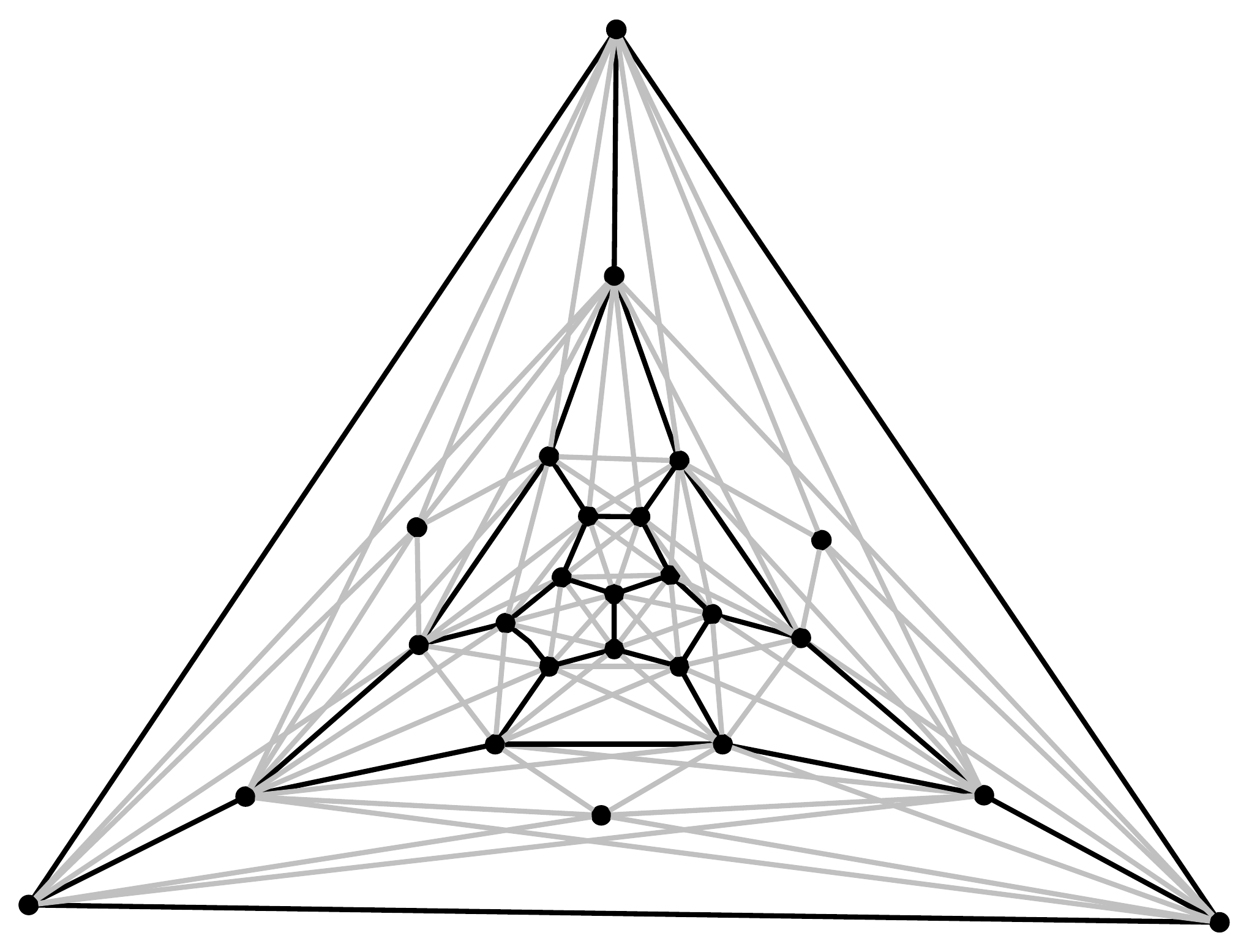}
  }
  \caption{\subref{fig:non-simple} A topological non-simple fan-planar graph with arbitrarily many edges. 
  \subref{fig:straightlinemax} The modified dodecahedral graph without extensions and \subref{fig:straightlinemaxext} fully extended to $5n - 11$ straight-line edges.}
 \end{figure}

If we generalize fan-planarity to \emph{$k$-fan-planarity}, where every edge may be crossed by at most $k$ fan-crossings, $k > 1$, then a simple probabilistic argument from the analysis of natural grids shows that for fixed $k$ every $n$-vertex $k$-fan-planar graph has at most $(3(k+1)^{k+1}/k^k)n$ edges, see Lemma 2.9 in~\cite{DBLP:journals/comgeo/AckermanFPS14}.
However, exact bounds are not known.

Next, we suspect that every $n$-vertex \emph{straight-line} fan-planar graph has at most $5n-11$ edges, similar to the $4n-9$ bound for straight-line $1$-planar graphs~\cite{didimoIPL}. The augmented dodecahedral graph from Figure~\ref{fig:dodecahedral} can be modified into a straight-line fan-planar graph with $5n-11$ edges: Replace one vertex of the dodecahedron by a triangle, which is used as the outer face. Draw the planar graph with convex faces, so that all pentagrams can be drawn straight-line, cf. Figure~\ref{fig:straightlinemax}. The three pentagons that became hexagons are filled with $2$-hops and spokes as explained in Proposition~\ref{prop:LB}, i.e., by one additional vertex and $12$ edges each.

Finally, how \emph{few} edges can an edge-maximal fan-planar graph have? An $n$-cycle with $2$-hop, respectively $3$-hop, edges provides edge-maximal fan-planar graphs with no more than $3n$ edges if parallel edges are allowed and no more than $\frac{8}{3}n$ edges, otherwise. We suspect these examples to be best-possible.

\subsection*{Acknowledgements}

This work started at the Bertinoro Workshop on Graph Drawing 2013.
The authors thank all participants for a wonderful working environment, and some referees for valuable hints especially concerning the relation to the previous work on grids in topological graphs.
 
We are also grateful for the openess and kindness of the authors of~\cite{klemz2021} after noting the issue in our preliminary version, which we took as an incentive to get this paper formally published after all.

\bibliographystyle{plain}
\bibliography{paper}

\begin{thebibliography}{10}

\bibitem{DBLP:journals/dcg/Ackerman09}
Eyal Ackerman.
\newblock On the maximum number of edges in topological graphs with no four
  pairwise crossing edges.
\newblock {\em Discrete {\&} Computational Geometry}, 41(3):365--375, 2009.

\bibitem{DBLP:journals/comgeo/AckermanFPS14}
Eyal Ackerman, Jacob Fox, J{\'{a}}nos Pach, and Andrew Suk.
\newblock On grids in topological graphs.
\newblock {\em Comput. Geom.}, 47(7):710--723, 2014.

\bibitem{ackerman2016size}
Eyal Ackerman, Bal{\'a}zs Keszegh, and Mate Vizer.
\newblock On the size of planarly connected crossing graphs.
\newblock In {\em International Symposium on Graph Drawing and Network
  Visualization}, pages 311--320. Springer, 2016.

\bibitem{DBLP:journals/jct/AckermanT07}
Eyal Ackerman and G{\'a}bor Tardos.
\newblock On the maximum number of edges in quasi-planar graphs.
\newblock {\em Journal of Combinatorial Theory, Ser. A}, 114(3):563--571, 2007.

\bibitem{DBLP:journals/combinatorica/AgarwalAPPS97}
Pankaj~K. Agarwal, Boris Aronov, J{\'a}nos Pach, Richard Pollack, and Micha
  Sharir.
\newblock Quasi-planar graphs have a linear number of edges.
\newblock {\em Combinatorica}, 17(1):1--9, 1997.

\bibitem{Alametal13}
Md.~Jawaherul Alam, Franz~J. Brandenburg, and Stephen~G. Kobourov.
\newblock Straight-line grid drawings of 3-connected 1-planar graphs.
\newblock In Stephen Wismath and Alexander Wolff, editors, {\em Graph Drawing},
  pages 83--94, Cham, 2013. Springer International Publishing.

\bibitem{AngeliniBKKS18}
Patrizio Angelini, Michael~A. Bekos, Michael Kaufmann, Philipp Kindermann, and
  Thomas Schneck.
\newblock 1-fan-bundle-planar drawings of graphs.
\newblock {\em Theor. Comput. Sci.}, 723:23--50, 2018.

\bibitem{DBLP:journals/jgaa/ArgyriouBKS13}
Evmorfia~N. Argyriou, Michael~A. Bekos, Michael Kaufmann, and Antonios
  Symvonis.
\newblock Geometric rac simultaneous drawings of graphs.
\newblock {\em Journal of Graph Algorithms Applications}, 17(1):11--34, 2013.

\bibitem{bekos2014recognition}
Michael~A Bekos, Sabine Cornelsen, Luca Grilli, Seok-Hee Hong, and Michael
  Kaufmann.
\newblock On the recognition of fan-planar and maximal outer-fan-planar graphs.
\newblock In {\em Graph Drawing}, pages 198--209. Springer, 2014.

\bibitem{BekosCGHK17}
Michael~A. Bekos, Sabine Cornelsen, Luca Grilli, Seok{-}Hee Hong, and Michael
  Kaufmann.
\newblock On the recognition of fan-planar and maximal outer-fan-planar graphs.
\newblock {\em Algorithmica}, 79(2):401--427, 2017.

\bibitem{Bekos020}
Michael~A. Bekos and Luca Grilli.
\newblock Fan-planar graphs.
\newblock In Seok{-}Hee Hong and Takeshi Tokuyama, editors, {\em Beyond Planar
  Graphs, Communications of {NII} Shonan Meetings}, pages 131--148. Springer,
  2020.

\bibitem{bekos2016density}
Michael~A Bekos, Michael Kaufmann, and Chrysanthi~N Raftopoulou.
\newblock On the density of non-simple 3-planar graphs.
\newblock In {\em International Symposium on Graph Drawing and Network
  Visualization}, pages 344--356. Springer, 2016.

\bibitem{bkr-socg17}
Michael~A. Bekos, Michael Kaufmann, and Chrysanthi~N. Raftopoulou.
\newblock On optimal 2- and 3-planar graphs.
\newblock In Boris Aronov and Matthew~J. Katz, editors, {\em 33rd International
  Symposium on Computational Geometry (SoCG 2017)}, volume~77 of {\em Leibniz
  International Proceedings in Informatics (LIPIcs)}, pages 16:1--16:16,
  Dagstuhl, Germany, 2017. Schloss Dagstuhl--Leibniz-Zentrum fuer Informatik.

\bibitem{BiedlC0MNR20}
Therese~C. Biedl, Steven Chaplick, Michael Kaufmann, Fabrizio Montecchiani,
  Martin N{\"{o}}llenburg, and Chrysanthi~N. Raftopoulou.
\newblock Layered fan-planar graph drawings.
\newblock In Javier Esparza and Daniel Kr{\'{a}}l', editors, {\em 45th
  International Symposium on Mathematical Foundations of Computer Science,
  {MFCS} 2020, August 24-28, 2020, Prague, Czech Republic}, volume 170 of {\em
  LIPIcs}, pages 14:1--14:13. Schloss Dagstuhl - Leibniz-Zentrum f{\"{u}}r
  Informatik, 2020.

\bibitem{DBLP:journals/jgaa/BinucciCDGKKMT17}
Carla Binucci, Markus Chimani, Walter Didimo, Martin Gronemann, Karsten Klein,
  Jan Kratochv{\'{\i}}l, Fabrizio Montecchiani, and Ioannis~G. Tollis.
\newblock Algorithms and characterizations for 2-layer fan-planarity: From
  caterpillar to stegosaurus.
\newblock {\em J. Graph Algorithms Appl.}, 21(1):81--102, 2017.

\bibitem{binucci2014fan}
Carla Binucci, Emilio Di~Giacomo, Walter Didimo, Fabrizio Montecchiani,
  Maurizio Patrignani, and Ioannis~G. Tollis.
\newblock Fan-planar graphs: Combinatorial properties and complexity results.
\newblock In Christian Duncan and Antonios Symvonis, editors, {\em Graph
  Drawing}, pages 186--197, Berlin, Heidelberg, 2014. Springer Berlin
  Heidelberg.

\bibitem{BinucciGDMPST15}
Carla Binucci, Emilio~Di Giacomo, Walter Didimo, Fabrizio Montecchiani,
  Maurizio Patrignani, Antonios Symvonis, and Ioannis~G. Tollis.
\newblock Fan-planarity: Properties and complexity.
\newblock {\em Theor. Comput. Sci.}, 589:76--86, 2015.

\bibitem{DBLP:journals/tcs/Brandenburg20}
Franz~J. Brandenburg.
\newblock On fan-crossing graphs.
\newblock {\em Theor. Comput. Sci.}, 841:39--49, 2020.

\bibitem{BrandenburgEGGHR12}
Franz~J. Brandenburg, David Eppstein, Andreas Glei{\ss}ner, Michael~T.
  Goodrich, Kathrin Hanauer, and Josef Reislhuber.
\newblock On the density of maximal 1-planar graphs.
\newblock In Walter Didimo and Maurizio Patrignani, editors, {\em Graph
  Drawing}, pages 327--338, Berlin, Heidelberg, 2013. Springer Berlin
  Heidelberg.

\bibitem{brandenburg2016path}
Franz~J Brandenburg, Alexander Esch, and Daniel Neuwirth.
\newblock Path-additions of graphs.
\newblock {\em arXiv preprint arXiv:1605.02891}, 2016.

\bibitem{DBLP:books/daglib/0017422}
Peter Brass, William O.~J. Moser, and J{\'a}nos Pach.
\newblock {\em Research problems in discrete geometry}.
\newblock Springer, 2005.

\bibitem{Cheongetal13}
Otfried Cheong, Sariel Har-Peled, Heuna Kim, and Hyo-Sil Kim.
\newblock On the number of edges of fan-crossing free graphs.
\newblock In {\em ISAAC}, volume 8283 of {\em Lecture Notes in Computer
  Science}, pages 163--173. Springer, 2013.

\bibitem{ChimaniKMV19}
Markus Chimani, Philipp Kindermann, Fabrizio Montecchiani, and Pavel Valtr.
\newblock Crossing numbers of beyond-planar graphs.
\newblock In Daniel Archambault and Csaba~D. T{\'{o}}th, editors, {\em Graph
  Drawing and Network Visualization - 27th International Symposium, {GD} 2019,
  Prague, Czech Republic, September 17-20, 2019, Proceedings}, volume 11904 of
  {\em Lecture Notes in Computer Science}, pages 78--86. Springer, 2019.

\bibitem{di2015planar}
Emilio Di~Giacomo, Walter Didimo, Giuseppe Liotta, Henk Meijer, and Stephen~K
  Wismath.
\newblock Planar and quasi-planar simultaneous geometric embedding.
\newblock {\em The Computer Journal}, page bxv048, 2015.

\bibitem{didimoIPL}
Walter Didimo.
\newblock Density of straight-line 1-planar graph drawings.
\newblock {\em Information Processing Letters}, 113(7):236--240, 2013.

\bibitem{DBLP:journals/tcs/DidimoEL11}
Walter Didimo, Peter Eades, and Giuseppe Liotta.
\newblock Drawing graphs with right angle crossings.
\newblock {\em Theoretical Computer Science}, 412(39):5156--5166, 2011.

\bibitem{DidimoLM19}
Walter Didimo, Giuseppe Liotta, and Fabrizio Montecchiani.
\newblock A survey on graph drawing beyond planarity.
\newblock {\em {ACM} Comput. Surv.}, 52(1):4:1--4:37, 2019.

\bibitem{dovslic2002some}
Tomislav Do{\v{s}}li{\'c}.
\newblock On some structural properties of fullerene graphs.
\newblock {\em Journal of Mathematical Chemistry}, 31(2):187--195, 2002.

\bibitem{DBLP:journals/tcs/EadesHKLSS13}
Peter Eades, Seok-Hee Hong, Naoki Katoh, Giuseppe Liotta, Pascal Schweitzer,
  and Yusuke Suzuki.
\newblock A linear time algorithm for testing maximal 1-planarity of graphs
  with a rotation system.
\newblock {\em Theoretical Computer Science}, 513:65--76, 2013.

\bibitem{DBLP:journals/dam/EadesL13}
Peter Eades and Giuseppe Liotta.
\newblock Right angle crossing graphs and 1-planarity.
\newblock {\em Discrete Applied Mathematics}, 161(7-8):961--969, 2013.

\bibitem{DBLP:journals/siamdm/FoxPS13}
Jacob Fox, J{\'a}nos Pach, and Andrew Suk.
\newblock The number of edges in k-quasi-planar graphs.
\newblock {\em SIAM Journal Discrete Mathematics}, 27(1):550--561, 2013.

\bibitem{Grigoriev07}
Alexander Grigoriev and Hans~L Bodlaender.
\newblock Algorithms for graphs embeddable with few crossings per edge.
\newblock {\em Algorithmica}, 49(1):1--11, 2007.

\bibitem{HongELP12}
Seok-Hee Hong, Peter Eades, Giuseppe Liotta, and Sheung-Hung Poon.
\newblock F{\'a}ry's theorem for 1-planar graphs.
\newblock In Joachim Gudmundsson, Juli{\'a}n Mestre, and Taso Viglas, editors,
  {\em Computing and Combinatorics}, pages 335--346, Berlin, Heidelberg, 2012.
  Springer Berlin Heidelberg.

\bibitem{hong2015testing}
Seok-Hee Hong and Hiroshi Nagamochi.
\newblock Testing full outer-2-planarity in linear time.
\newblock In {\em International Workshop on Graph-Theoretic Concepts in
  Computer Science}, pages 406--421. Springer, 2015.

\bibitem{kaufmann2014density}
Michael Kaufmann and Torsten Ueckerdt.
\newblock The density of fan-planar graphs.
\newblock {\em arXiv preprint arXiv:1403.6184}, 2014.

\bibitem{klemz2021}
Boris Klemz, Kristin Knorr, Meghana~M. Reddy, and Felix Schr\"oder.
\newblock Simplifying non-simple fan-planar drawings.
\newblock {\em submitted}, 2021.

\bibitem{Pachetal}
J{\'a}nos Pach, Rom Pinchasi, Micha Sharir, and G{\'e}za T{\'o}th.
\newblock Topological graphs with no large grids.
\newblock {\em Graphs and Combinatorics}, 21(3):355--364, 2005.

\bibitem{DBLP:journals/combinatorica/PachT97}
J{\'a}nos Pach and G{\'e}za T{\'o}th.
\newblock Graphs drawn with few crossings per edge.
\newblock {\em Combinatorica}, 17(3):427--439, 1997.

\bibitem{Ringel65}
Gerhard Ringel.
\newblock Ein {S}echsfarbenproblem auf der {K}ugel.
\newblock {\em Abh. des Mathematischen Seminars Hamburg}, 29:107--117, 1965.

\bibitem{DBLP:journals/corr/SukW13}
Andrew Suk and Bartosz Walczak.
\newblock New bounds on the maximum number of edges in \emph {k}-quasi-planar
  graphs.
\newblock {\em Computational Geometry}, 50:24--33, 2015.

\bibitem{DBLP:journals/siamdm/Suzuki10}
Yusuke Suzuki.
\newblock Re-embeddings of maximum 1-planar graphs.
\newblock {\em SIAM Journal of Discrete Mathematics}, 24(4):1527--1540, 2010.

\bibitem{TardosToth}
G{\'a}bor Tardos and G{\'e}za T{\'o}th.
\newblock Crossing stars in topological graphs.
\newblock {\em SIAM Journal on Discrete Mathematics}, 21(3):737--749, 2007.

\bibitem{DBLP:journals/jgt/Thomassen88a}
Carsten Thomassen.
\newblock Rectilinear drawings of graphs.
\newblock {\em Journal of Graph Theory}, 12(3):335--341, 1988.

\bibitem{tutte1963draw}
William~T Tutte.
\newblock How to draw a graph.
\newblock {\em Proc. London Math. Soc}, 13(3):743--768, 1963.

\end{thebibliography}

\end{document}